\newtheorem{assumption}{Assumption}
\newtheorem{proposition}{Proposition}
\newtheorem{theorem}{Theorem}
\theoremstyle{definition}
\theoremstyle{example}
\newtheorem{remark}{Remark}
\newcommand\independent{\protect\mathpalette{\protect\independenT}{\perp}}
\def\independenT#1#2{\mathrel{\rlap{$#1#2$}\mkern2mu{#1#2}}}
\title{Extracting Mechanisms from Heterogeneous Effects: An Identification Strategy for Mediation Analysis \footnote{I thank Nathaniel Beck, Adam N. Glynn, Donald Green, Kosuke Imai, Dimitri Landa, Connor Jerzak, John Marshall, Cyrus Samii, Yuki Shiraito, Tara Slough, Anna Wilke, Teppei Yamamoto; seminar audiences at Columbia University, New York University, and University of North Carolina at Chapel Hill; and participants at PolMeth XL and APSA 2023 for helpful feedback.}}
\date{\today}
\author{Jiawei Fu\footnote{Assistant Professor, Duke University \url{jiawei.fu@duke.edu}}}
\begin{document}
\maketitle
\singlespacing

\begin{abstract}
Understanding causal mechanisms is crucial for explaining and generalizing empirical phenomena. Causal mediation analysis offers statistical techniques to quantify the mediation effects. Although numerous methods have been developed for causal inference more broadly, the methodological toolkit for causal mediation analysis remains limited. Current methods often require multiple ignorability assumptions or sophisticated research designs. In this paper, we introduce an alternative identification strategy that enables the simultaneous identification and estimation of treatment and mediation effects. By combining explicit and implicit mediation analysis, this strategy leverages heterogeneous treatment effects and does not require addressing some unobserved confounders. Monte Carlo simulations demonstrate that the method is more accurate and precise across various scenarios. To illustrate the efficiency and efficacy of our method, we apply it to estimate the causal mediation effects in two studies with distinct data structures, focusing on common pool resource governance and voting information. 

    \noindent 

\vspace{.1in}
\noindent\textbf{Keywords: }Mediation Analysis, Identification, Heterogeneous Treatment Effects, Mechanism, Moderation



\end{abstract}

\thispagestyle{empty}
\doublespacing
\clearpage

\doparttoc 
\faketableofcontents 

\setcounter{page}{1}

\section{Introduction}

“How and why does the treatment affect the outcome?” This causal mechanism question lies at the core of social science research, guiding both theoretical development and the generalization of empirical findings. Causal mediation analysis answers this question by quantifying the portion of a treatment’s effect that operates through intermediate variables between the treatment and the outcome \citep{vanderweele2015explanation}. The dominant approach in social science relies on the assumption of sequential ignorability \citep{imai2011unpacking}. Although numerous methods have been developed for causal inference more broadly, the methodological toolkit for causal mediation analysis remains limited.\footnote{Only around $10\%$ of published papers have the opportunity to employ formal causal mediation analysis, according to \citet{blackwell2024assumption}.} This study proposes an alternative identification strategy that explicitly quantifies causal mediation effects by leveraging heterogeneous treatment effects (HTEs).

Mediation analysis has been approached through two main methods. In \textit{explicit mediation analysis}, researchers aim to identify the exact causal mediation effect—that is, the causal effects mediated by a specific proposed mediator. Using either the counterfactual or structural framework, as illustrated in Figure \ref{fig:illus}, the total treatment effect is decomposed into direct and indirect effects. Because of the existence of a mediator in the indirect effect, mediation analysis requires stronger assumptions than what is required for the identification of total treatment. The identification of treatment effects requires addressing confounding between the treatment and the outcome ($U_1$ and $U_3$ in the right panel), while mediation analysis additionally requires the ignorability of the mediator ($U_2$, $U_4$, and $U_1$) \citep{imai2010identification}. In some empirical studies, satisfying and justifying the second ignorability assumptions is challenging, even in the traditional randomized experiment.


\begin{figure}
   \begin{center}
       \begin{tikzpicture}
    
            \node at (8,0) (t1) {$Treatment$};
           \node at (10.5,0) (m1) {$Mediator$};
           \node at (13,0) (y1) {$Outcome$}; 
           \node at (10.5,-1) (y) {Indirect Effect};
           
           \node[rotate=-90] at (14.5,0.6) (ta) {Overall Effect}; 
           \path[->] (t1) edge node[auto] {} (m1);
          \path[->] (m1) edge node[auto] {} (y1);
          \path[->]
            (t1) edge [bend left=45] node[auto] {Direct Effect} (y1);
             \draw [pen colour={orange}, decorate,
    decoration = {calligraphic brace,mirror,amplitude=5pt}] (8,-0.3) --  (13,-0.3);

            \draw [pen colour={orange}, decorate,
    decoration = {calligraphic brace,amplitude=5pt}] (14,1.5) --  (14,-0.4);

        \node at (16,0) (t) {$T$};
           \node at (18,0) (m) {$M$};
           \node at (18,0.8) (m') {$U_4$};
           \node at (20,0) (y) {$Y$};
           \node at (17,-1) (u1) {$U_1$};
           \node at (19,-1) (u2) {$U_2$};
           \node at (18,-2) (u3) {$U_3$};
           \draw[->] (t) -- (m);
           \draw[->] (m) -- (y);
           \draw[->] (u1) -- (t);
           \draw[->] (u1) -- (m);
           \draw[->] (u2) -- (m);
           \draw[->] (u2) -- (y);
           \draw[->] (t) -- (m');
          \draw[->] (m') -- (y);
          \draw[->] (m') -- (m);
           \draw[->]
            (u3) edge [bend left=45] (t)
            (u3) edge [bend right=45] (y);
            
          \draw[->] (t) .. controls (16,1.44)and (17.2,2) .. (18,2)
               .. controls (18.9,2) and (20,1.44) .. (y);  
       \end{tikzpicture}
   \end{center}
    \caption{The left panel illustrates the basic decomposition of the overall treatment effect. The right panel is one example of the directed acyclic graph. $T$ is the treatment variable, $M$ is the mediator, $Y$ is the outcome variable, and $U_j, j=1,2,3,4$ are confounders.}\label{fig:illus}
\end{figure}

In \textit{implicit mediation analysis}, researchers do not seek to identify the exact causal mediation effect but rather aim to derive qualitative insights into the underlying mechanism \citep{bullock2021failings}. For example, they may estimate the correlation between the treatment and mediator or between the mediator and the outcome \citep{blackwell2024assumption}. The most popular approach utilizes HTEs. The underlying intuition is that if a specific mechanism is active for certain units, it will generate HTEs with respect to a particular covariate. If the mechanism is inert, it will fail to produce HTEs \citep{fu2023heterogeneous}. This method is widely used because it does not require measuring the mediator or relying on strong identification assumptions.

Our identification strategy incorporates the strength of two approaches and bypass some ignobility assumptions. We first decompose the total treatment effect by highlighting the treatment effect on the mediator. As we show in section \ref{sec:model}, the new decomposition has the form of simple linear regression, where the `dependent variable' is the average treatment effect on the outcome and the `independent variable' is the average total treatment effect on the mediator. To \textit{non-parametrically} identify the causal mediation effects (i.e. the slope), we assume that the average treatment effect on the mediator is not correlated with the error term. Under this assumption, even if confounders exist between the mediator and the outcome (so that sequential ignorability fails), we can still identify the causal mediation effect. In the instrumental variables literature, a related identification assumption appears in \citet{kolesar2015identification} and in the Mendelian randomization framework \citep{bowden2015mendelian}. Actually, \citet{fu2023heterogeneous} shows that this kind of isolated mechanism assumption is also generically necessary for HTEs to reflect \textit{qualitative} information about mechanism activation, even if not in a quantitative sense. In this respect, our approach does not impose additional requirements; if this assumption fails to hold for all covariates, HTEs cannot be used to study mechanism activation in the usual way. 

To implement the strategy, we rely on HTEs on the mediator as our primary data sources. If the same research is conducted several times in different locations, we naturally obtain different treatment effects from those studies. In section \ref{sec:des}, we propose two more research designs to exploit those HTEs. In the first design, we suggest researchers use pre-treatment covariates to identify those subgroups, probably with data-driven methods like causal tree or forest \citep{wager2018estimation}. Next, for each subgroup, we obtain the required data (average treatment effects on the outcome and on the mediator), which we refer to as \textit{Heterogeneous Subgroup Design}. It is worth emphasizing that our goal is not to identify the true underlying subgroups, but rather to obtain sufficient variation in the data. The second design explores variation in treatment intensity. For example, in a Get-Out-The-Vote (GOTV) experiment, researchers might randomly assign different numbers of canvassing mails. By manipulating the magnitude or intensity of the treatment across different arms, we can induce varying treatment effects on the mediator, a design we refer to as the \textit{Multiple Treatment Meta Design}. 

The estimation and inference are discussed in section \ref{sec:infer}. We then use Monte Carlo simulation to examine the performance of those estimators in section \ref{sec:appli}. Our method demonstrates greater robustness and efficiency. We apply our methodology to estimate causal mediation effects in two empirical studies using actual data. The first study is about `Governance on Resources'. We demonstrate the estimation of mediation effects using just six group-level estimates derived from six different sites. The second examines the impact of information on voting behavior, where we calculate the mediation effect using individual-level data. We quantify the causal mediation effect of “bad/good information” about legislators on voting choice through voters’ perceptions of politicians’ effort. In section \ref{sec:ext}, we offer a range of valuable extensions designed to assist applied researchers in meeting our core identification assumptions through both the design stage and the data analysis stage.

In general, we introduce an alternative identification strategy for causal mediation analysis. Causal mediation analysis is challenging, and each method relies on its own identification assumptions. We hope our method provides an additional useful tool for applied researchers who are eager to learn about causal mechanisms. Instead of relying on unconfoundedness assumptions, our method leverages HTEs and the theoretical structure of mechanisms. \textit{No single method is suitable for all study types}: identification based on "exogeneity" or "ignorability" should be applied when researchers have sufficient covariates and understanding of potential confounders. Conversely, when confounders are unclear or unmeasured, our approach—relying on HTEs and knowledge of mechanisms—may be an effective alternative strategy.

\section{Causal Mediation Framework}\label{sec:decom}

In this section, we review basic definitions and notations for mediation analysis, if necessary, with the help of directed acyclic graphs (DAG).\footnote{Although there exist some debates about the potential outcome approach and graphic approach (see \citet{imbens2020potential}, and Pearl's reply: http://causality.cs.ucla.edu/blog/index.php/2020/01/29/on-imbens-comparison-of-two-approaches-to-empirical-economics/), we still find both approaches have their own particular merits.} 

\subsection{Running Example}

Throughout this paper, we will use the Get-Out-The-Vote experiments as running examples to illustrate various concepts.  Consider the scenario where researchers have observed that treatments such as door-to-door canvassing, phone calls, or mailings can significantly increase voter turnout \citep{green2019get}. A critical question arises: through what mechanisms, do these treatments influence turnout? Mediation analysis helps us quantify the average causal effects along these pathways.

To better understand the mechanisms involved in this example, we draw on political economy theory. According to \citet{riker1968theory}, there are four elements that influence a voter’s decision to abstain or vote, as shown in Figure \ref{fig:voting}. (1) Cost of Voting ($C$): This includes costs related to information gathering, registration, and transportation, among others.  Generally, higher costs discourage turnout. (2) Civic Duty ($D$): In democracies, voting is often portrayed as a civic responsibility. $D$ captures the utility derived from fulfilling this civic duty; fail to fulfill the duty may lead to social pressure as dis-utility. (3) Utility Difference ($B$): Voters are more likely to turn out if the utility gained from their preferred candidate winning significantly outweighs the utility loss if they lose. (4) Pivotality ($p_i$): This refers to the probability that a voter’s vote could decisively impact the election outcome, with $p_i \in [0,1]$ representing the probability. A lower likelihood of affecting the result may decrease the motivation to vote. 

\begin{figure}[!h]
    \centering
 \includegraphics[width=0.8\linewidth]{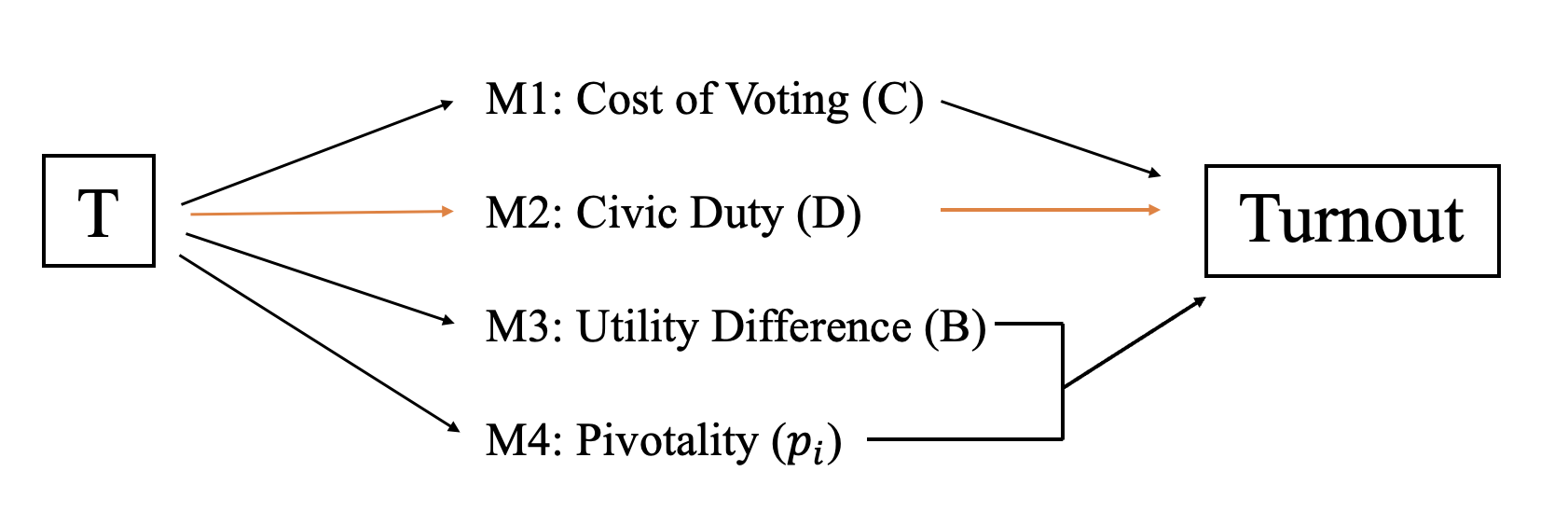}
    \caption{Mechanisms of Turnout Derived from Theory.}
    \label{fig:voting}
\end{figure}

Consequently, a voter prefers to vote over abstaining if and only if $p_i B + D - C \ge 0$; that is, the utility from voting outweighs the costs. These four elements—cost, civic duty, utility difference, and pivotality—represent the mechanisms that can influence turnout. As the formula clearly demonstrates, Mechanisms 3 and 4 are correlated. In our causal mediation question, we specifically focus on the civic duty mechanism, measured by social pressure, to quantify its causal effect on voter turnout.

\subsection{Causal Decomposition}

In the counterfactual approach of causal inference, causation and mediation are interpreted and decomposed with potential outcomes \citep{holland1986statistics}. Let $T$ be the binary treatment and $Y$ be the outcome variable. The overall treatment effect of $T$ on $Y$ for individual $i$, denoted by $\tau^i$, is represented by
$$
\tau^i = Y^i(1,M^i_1(1))- Y^i(0,M^i_1(0)).
$$ For the total effects, mediators $M$ should consider potential outcomes under the treatment status. For instance, $M_1^i(1)$ represents the potential social pressure when the treatment $T=1$. 



Total treatment effects can be decomposed into natural direct and indirect effects \citep{robins1992identifiability}. We call $\delta^i(t)= Y^i(1,M^i(t))-Y^i(0,M^i(t))$ the natural direct effect for $t=0, 1$, where the mediator is set to the value it would have been under treatment $t$. It captures the effects that are not transmitted by the mediator of interest. \footnote{The terminology ``natural'' is in contrast to ``controlled.'' For controlled direct effect, we fix the mediator at a certain value $m$, rather than their potential outcomes under a given treatment assignment. Therefore, the controlled direct effect can be defined as $Y^i(1, m)- Y^i(0, m)$ \citep[see][]{acharya2016explaining}.} Similarly, $\eta^i(t) = Y^i(t,M^i(1))-Y^i(t,M^i(0))$ is the natural indirect effect for $t=0, 1$. It denotes the treatment effect through the mediating variable. In our example, this refers to how the decision to turn out changes in response to the shift in social pressure from what it would be under the control condition ($M^i(0)$) to the treatment condition ($M^i(1)$), while holding the treatment status constant at $t$. \footnote{See more discussions in supplementary materials (SI) \ref{si:decomp}.}

Due to the fundamental problem of causal inference, we are confined to the estimation of aggregate effects, typically averages. We therefore define $\tau$ as the average overall treatment effect: $\tau := \mathbb{E}[\tau^i]$,  $\delta:=\mathbb{E}[\delta^i]$ as the average direct effect, and $\eta:= \mathbb{E} [\eta^i]$ as the average indirect effect for mechanism represented by the mediator $M$. As is standard, illustrated in the left DAG in Figure \ref{fig:dag1}, we can then decompose the total causal effect into the sum of natural direct and indirect effects: $\tau = \delta(t)+\eta(1-t)$.

\begin{figure}
   \begin{center}
       \begin{tikzpicture}
           \node at (0,0) (t) {$T$};
           \node at (2,0) (m) {$M$};
           \node at (4,0) (y) {$Y$};   
           \node at (2,-0.7) (x) {$\eta$}; 
          \path[->] (t) edge  (m);
          \path[->] (m) edge  (y);
           \path[->]
            (t) edge [bend left=45] node[auto] {$\delta$} (y);   
            \draw [pen colour={orange}, decorate,
    decoration = {calligraphic brace,mirror,amplitude=5pt}] (0,-0.3) --  (4,-0.3);

            \node at (6,0) (t1) {$T$};
           \node at (8,0) (m1) {$M$};
           \node at (10,0) (y1) {$Y$}; 
           \node at (10.5,0.6) (ta) {$\tau$}; 
           \path[->] (t1) edge node[auto] {$\gamma$} (m1);
          \path[->] (m1) edge node[auto] {$\tilde{\beta}$} (y1);
          \path[->]
            (t1) edge [bend left=45] node[auto] {$\delta$} (y1);

            \draw [pen colour={orange}, decorate,
    decoration = {calligraphic brace,amplitude=5pt}] (10.2,1.5) --  (10.2,-0.4);
          
       \end{tikzpicture}
   \end{center}
   \caption{Decompositions with Counterfactual and Structural Approaches. $\delta$ is the direct effect, $\eta$ is the indirect effect, $\tau$ is the total effect, $\gamma$ is the treatment effect on the mediator, $\tilde{\beta}$ is the mediator effect on the outcome.}\label{fig:dag1}
\end{figure}

Historically, path analysis and structural equation modeling (SEM) is the predominantly used framework for conducting mediation analysis \citep[see][]{mackinnon2012introduction,hong2015causality}. As a special case, \citet{baron1986moderator} develop the linear additive model using a single treatment, mediator, and outcome variable. It comprises two main equations:
\begin{align}
        Y &= \alpha_1 + \delta T + \tilde{\beta} M + \varepsilon_1 \label{equ:sem1}\\
        M &= \alpha_2 + \gamma T +   \varepsilon_2 \label{equ:sem2}
\end{align}

Replacing $M$ (equation \eqref{equ:sem2}) in \eqref{equ:sem1}, we obtain the reduced form as follows:
\begin{align}
    Y &= (\alpha_1+\alpha_2 \tilde{\beta}) + (\delta+ \tilde{\beta} \gamma)T + (\varepsilon_1+\tilde{\beta} \varepsilon_2) \label{equ:sem3}\\
    &:= \alpha_3 + \tau T+\varepsilon_3 \label{equ:sem4}
\end{align}

Traditionally, parameter before the treatment $T$, i.e., $\tau=\delta+\tilde{\beta} \gamma$ in \eqref{equ:sem4} is interpreted as the total treatment effect; $\delta$ is interpreted as the direct effect; and $\tilde{\beta} \gamma$ is interpreted as the indirect effect. The right part of Figure \ref{fig:dag1} illustrates this DGP. Unlike the counterfactual approach, structural models implicitly incorporate parametric assumptions, such as linearity and constant effects. 


\subsection{Explicit Mediation Analysis}\label{sec:idenassump}

Explicit mediation analysis aims to quantify the exact mediation effects. This essentially requires the sequential ignorability to non-parametrically identify the causal mediation effect. Multiple versions of sequential ignorability assumption exist. One of the most concise versions is given by \citet{imai2010identification}. Formally, it has two important parts. The first part is similar to the unconfoundedness assumption in causal inference. Essentially, it requires treatment assignment to be ignorable given the observed pretreatment confounders $X$: 
\begin{align}
    \{Y^i(t',m), M^i(t)\} \independent T^i | X^i=x \label{equ:seq1}
\end{align}

Notably, the treatment value is different for outcome $Y$ and mediator $M$. Hence, it specifies the full joint distribution of all the potential outcome and mediator variables \citep{ten2012review}. The second part entails the mediator is ignorable given the observed treatment and pre-treatment confounders: 
\begin{align}
    Y^i(t',m) \independent M^i(t)|T^i=t,X^i=x\label{equ:seq2}
\end{align} In the assumption \eqref{equ:seq2}, the mediator takes the value at the ``current'' treatment assignment $t$, but the potential outcome is under treatment assignment $t'$. For example, it requires that the potential social pressure under treatment condition is independent of the potential turnout under control condition, given the individual is under the treatment status and has pre-treatment variables $X^i=x$. This cross-world argument makes it challenging to be satisfied in some studies. 

One important limitation of both assumptions is that all covariates $X$ must be pre-treatment; generally, the natural indirect effect is not identified even if we have data on the post-treatment confounders \citep{avin2005identifiability}. There have been several attempts to relax these constraints by introducing additional assumptions. \citet{robins2003semantics} proposes the finest fully randomized causally interpreted structured tree graph (FRCISTG) model. Under this semantic framework for causal DAGs, the ``cross-world/indices'' property can be relaxed, allowing for post-treatment confounders. However, an additional no-interaction assumption is required to nonparametrically identify the causal mediation effect. 
Similar assumption is also discussed in \citet{glynn2012product}. \citet{rudolph2023efficient} and \citet{tchetgen2014identification} explore how a monotonicity assumption can help identify causal mediation effects when a confounder is affected by treatment. Recently, a new strand of literature has proposed another estimand, the interventional indirect effect, which circumvents these strong assumptions \citep{diaz2021nonparametric,miles2023causal}.


Typically, to non-parametrically identify natural indirect effects, sequential ignorability requires us to account for all pre-treatment confounders affecting treatment, mediator, and outcome. Practically, researchers hardly observe and measure all confounders, and it is challenging to ensure all confounders are under control. Because of the challenge, in practice, researchers often rely on modeling assumptions to estimate the causal mediation effect. The traditional choice is the linear regression model (equation \eqref{equ:sem1}-\eqref{equ:sem4}). Several parametric assumptions are required to identify parameters and thus the indirect effect \citep{mackinnon2012introduction}. In particular, we need two assumptions: 
\begin{enumerate}
    \item Correct function form, which primarily means linear in parameter and additivity;
    \item No omitted variable, especially error terms $\epsilon_j$ should not correlate across equations.
\end{enumerate}
Those function-form assumptions can also be interpreted by counterfactual languages \citep[See][]{jo2008causal,sobel2008identification}. Generally, explicit mediation analysis under the linear regression model still requires several ``exogeneity'' assumptions to identify parameters in regression equations ($\tilde{\beta}$ and $\gamma$ or $\tau$ and $\delta$). As shown in Figure \ref{fig:dag2}, both non-parametric and model-based assumptions require controlling $U_1$ and $U_2$. However, in practice, it is difficult to measure and control all such confounders.

\subsection{Implicit Mediation Analysis}

Given these challenges, researchers frequently rely on implicit mediation analysis \citep{bullock2021failings}. Although it does not necessarily reveal the causal mediation effect, it can provide useful qualitative insights into the mechanism. However, many practices incorporate unstated assumptions. 

The dominant approach to implicit mediation analysis is to examine HTEs to assess whether a proposed mechanism is active. Although mediation is conceptually distinct from moderation, moderation analyses are often used to provide evidence about underlying causal mechanisms. The intuition is straightforward. For example, if a mechanism explains the causal relationship, we would expect individuals with different values of a given covariate to exhibit different treatment effects, as the covariate is assumed to moderate the effect. A major advantage of this approach is that it does not require direct measurement of the mediator. However, \citet{fu2023heterogeneous} find that valid inference of a mechanism from HTEs requires an exclusion assumption, ruling out the possibility of the covariate moderating through alternative pathways. This is intuitive because, otherwise, the observed HTEs cannot be attributed to the underlying mechanism. 

Moreover, in some cases, we cannot directly observe the outcome variable that is affected by the mechanism implied by the theory; instead, we can only measure a related outcome. For example, information about corruption involving the incumbent may influence voters’ relative evaluations of the incumbent and other candidates. While we cannot directly observe changes in utility, we can observe their behavioral manifestation in voting choices. In such cases, even when the exclusion assumption holds, the presence of HTEs does not necessarily indicate activation of the underlying mechanism. 

When the mediator is measured, another approach is to test whether the treatment affects the proposed mediator; \citet{blackwell2024assumption} clarify that a monotonicity assumption is needed to draw sharper conclusions. 


\section{Identification with Heterogeneous Effects}\label{sec:model}

In this section, we introduce the new identification strategy, which starts with synthetic causal decomposition under the counterfactual approach but emphasizes the mechanical process as the structural approach. Under the new decomposition, we then convert the difficult mediation problem into a simple linear regression problem. It turns out to be quite general and simple to identify the causal mediation under this new structure. We then compare our identification strategy with existing methods, emphasizing that it provides a useful alternative in settings where the sequential ignorability assumption is violated.

Recall that total causal effect can be decomposed into direct and indirect effects. As is standard, we begin with the ``no interaction effect'' situation in the main text ($\delta=\delta(1)=\delta(0)$ and $\eta=\eta(1)=\eta(0)$).\footnote{See SI \ref{si:int} for the discussion on the interaction effect.} The identification strategy starts with a straightforward transformation of this decomposition.
\begin{align}
    \tau &= \mathbb{E}[Y^i(1,M^i(0))- Y^i(0,M^i(0))]+ \mathbb{E} [Y^i(1,M^i(1)-Y^i(1,M^i(0))]\\
     &= \mathbb{E}[Y^i(1,M^i(1))- Y^i(0,M^i(1))]+ \mathbb{E} [Y^i(0,M^i(1)-Y^i(0,M^i(0))]\\
    &=\mathbb{E}[Y^i(1,M^i(1))- Y^i(0,M^i(1))] + \frac{\mathbb{E} [Y^i(0,M^i(1)-Y^i(0,M^i(0))]}{\mathbb{E}[M^i(1)-M^i(0)]} \times \mathbb{E}[M^i(1)-M^i(0)] \label{equ:p2}\\
    &:= \delta + \beta \gamma \label{equ:p3}
\end{align} The first two lines are two decompositions. In the line \eqref{equ:p2},we multiply and divide the average indirect effect $\eta=\mathbb{E} [Y^i(0,M^i(1)-Y^i(0,M^i(0))]$ by the same term $\gamma=\mathbb{E}[M^i(1)-M^i(0)]$. It is the average effect of treatment on the mediator of interests. We define $\frac{\eta}{\gamma}=\frac{\mathbb{E} [Y^i(0,M^i(1)-Y^i(0,M^i(0))]}{\mathbb{E}[M^i(1)-M^i(0)]}$ by $\beta$, which denotes the ratio of how pure indirect effect changes according to one unit change of $\gamma$. If researchers plan to use SEM to conduct mediation analysis, under the linear model \eqref{equ:sem1} and \eqref{equ:sem2}, we can easily see that $\beta=\tilde{\beta}$. Therefore, $\beta$ can be interpreted as the effect of the mediator $M$ on the outcome $Y$ \footnote{
Under linear SEM, we implicitly assume $\varepsilon_1$ and $\varepsilon_2$ are independent. From equation $\eqref{equ:sem1}$ and $\eqref{equ:sem2}$, we observe
\begin{equation}
\begin{aligned}
     \mathbb{E}[Y^i(0,M^i(1))]&=\alpha_1+\tilde{\beta}(\alpha_2+\gamma) \\
     \mathbb{E}[Y^i(0,M^i(0))]&=\alpha_1+\tilde{\beta}\alpha_2\\
\end{aligned}
\end{equation} Therefore, $\eta=\mathbb{E}[Y^i(0,M^i(1))]-\mathbb{E}[Y^i(0,M^i(0))]=\tilde{\beta} \gamma$, where $\mathbb{E}[M^i(1)-M^i(0)]=(\alpha_2+\gamma)-\alpha_2=\gamma.$}. Finally, we use simple notation to represent the final decomposition $\tau=\delta+\beta \gamma$.


If we can identify $\beta$ and $\gamma$, equivalently we can identify $\eta=\beta\gamma$. In most empirical studies, $\gamma$ and $\tau$ are easy to identify if treatment is as if random through careful research designs. The remaining part is to identify the parameter $\beta$. 

A critical insight in causal inference is the recognition that causal effects vary across populations and even among individuals. This implies that the $\gamma$ is a random variable (We will discuss how to get this sample in the section \ref{sec:des}). Therefore, equation \eqref{equ:p3} can be written as $\tau_k=\delta_k+\beta\gamma_k$. We use subscript $k$ to emphasize that they are random rather than fixed values. Next, we complete the transformation by adding and subtracting the expectation of $\delta_k$.
\begin{align}
    \tau_k &=\delta_k+\beta\gamma_k \\
    &= \mathbb{E}[\delta_k] + \beta\gamma_k + (\delta_k-\mathbb{E}[\delta_k]) \label{equ:lm2}\\
 \Rightarrow  \tau_k &= \mathbb{E}[\delta_k] + \beta\gamma_k + \varepsilon_k \label{equ:lm3}
\end{align} In the Line \eqref{equ:lm2}, we add and subtract the expectation of $\delta_k$; in line \eqref{equ:lm3}, we define $\varepsilon_k=(\delta_k-\mathbb{E}[\delta_k])$. In the structural equation, $\beta$, the ratio of the indirect effect to the treatment effect on the mediator is assumed to be constant. In a more general case, $\beta$ can also be random, and is denoted by $\beta_k$. Here, we focus directly on the effects. For strict frequentists, one can view this framework as analogous to a random-effects model in meta-analysis, which reconciles the “random” nature of $\gamma_k$ and $\tau_k$. A direct analogue is provided in Section~\ref{sec:des}.

    
    
    

Equation \eqref{equ:lm3} should be familiar to readers: it is a simple linear regression model. The key assumption for identifying $\beta$ is that the direct effect $\delta$ is uncorrelated with the effect of treatment on mediator $\gamma$. 

\begin{assumption}[Isolated Mechanism]\label{ass:cov}
    $Cov(\gamma_k,\delta_k)=0$.
\end{assumption}

The assumption requires that, $\gamma$, the average treatment effect on the mediator of interest, is not correlated with the average direct effect. If there exist multiple mechanisms, generally, we should interpret $\delta_k$ as effects from all other possible mechanisms, see SI \ref{app:mul} and \ref{app:ext}. Therefore, this assumption implies that the mechanism of interest is somewhat isolated from other mechanisms. A similar assumption has been proposed in the multiple instrumental variables literature \citep{kolesa2013estimation, bowden2015mendelian}. A related condition is also required for implicit mediation analysis using heterogeneous treatment effects \citep{fu2023heterogeneous}. Intuitively, if the covariate moderates multiple mechanisms, the observed HTEs may not provide information specifically about the mechanism of interest. Similarly, we require that the HTE in $\gamma$ reveals information solely about $\gamma$, rather than about other mechanisms. Technically, this assumption is equal to that of the traditional simple linear regression assumption $\mathbb{E}[\gamma_k\epsilon_k]=0$ and implies $Cov(\gamma_k,\varepsilon_k)=0$.\footnote{To see this, $\mathbb{E}[\gamma_k\epsilon_k]=\mathbb{E}[\gamma_k(\delta_k-\mathbb{E}\delta_k)]=Cov(\gamma_k,\delta_k)=0$ and $Cov(\gamma_k,\varepsilon_k)=Cov(\gamma_k,\delta_k-\mathbb{E}\delta_k)=Cov(\gamma_k,\delta_k)=0$. }

The validity of this assumption depends on the theoretical framework in question. To clarify, let us revisit our ongoing example. As we introduced earlier, there are four mechanisms—cost, civic duty, utility difference, and pivotality—that influence the decision to vote. If a researcher aims to examine the impact of mailings that encourage voters with the message "DO YOUR CIVIC DUTY—VOTE!" it is plausible to assume that such an intervention mainly affects the civic duty utility. Moreover, according to the theory, since $p_i b+D - C > 0$, other mediators are distinct and capture different mechanisms. Therefore, it is reasonable to assume that the effect of the canvassing treatment on civic duty is uncorrelated with its effects on other mechanisms. However, in general, assumption \ref{ass:cov} requires the belief that no other variable concurrently moderates both the mechanism of interest and other mechanisms. We can, however, allow for correlations among other mechanisms, such as the correlation between utility difference and pivotality, as illustrated in \ref{fig:voting}. In the section \ref{sec:ext}, we will provide more discussions and techniques on meeting the identification assumption in application.

In Proposition \ref{prop:consist}, we propose a simple estimator $\hat{\beta}$ to estimate the unknown $\beta$. 

\begin{proposition}\label{prop:consist}
Let $(\tau,\delta,\gamma)$ be random variables and as defined in \eqref{equ:p2} and \eqref{equ:p3}. Suppose we have the random sample $(\tau_k,\gamma_k)_{k \in K}$. Suppose $Var(\gamma_k)>0$ and Assumption \ref{ass:cov} holds.

Consider estimator $\hat{\beta}=\frac{\sum_{k=1}^K(\gamma_k-\overline{\gamma}_k)\tau_k}{\sum_{k=1}^K(\gamma_k-\overline{\gamma}_k)^2}$.

(1) If $\beta$ is a constant, then $\hat{\beta}$ $\xrightarrow{p} \beta$ as $K \rightarrow \infty$;

(2) If $\beta_k$ is a random variable, then $\hat{\beta} \xrightarrow{p} \mathbb{E}[\beta_k]$ as $K \rightarrow \infty$ under assumption $\beta_k\perp \gamma_k$

and thus $\eta$ is consistently estimated.
\end{proposition}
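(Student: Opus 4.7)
The plan is to mimic the standard proof of OLS slope consistency, with the key step being to invoke Assumption~\ref{ass:cov} exactly where the usual orthogonality condition would appear.

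First, for part (1), I would substitute the population relation $\tau_k = \mathbb{E}[\delta_k] + \beta \gamma_k + \varepsilon_k$ from \eqref{equ:lm3} into the numerator of $\hat{\beta}$. Because $\sum_k(\gamma_k - \overline{\gamma}_k)$ is identically zero, the constant intercept $\mathbb{E}[\delta_k]$ drops out, which gives the algebraic decomposition
\begin{equation*}
\hat{\beta} \;=\; \beta \;+\; \frac{K^{-1}\sum_{k=1}^{K}(\gamma_k - \overline{\gamma}_k)\varepsilon_k}{K^{-1}\sum_{k=1}^{K}(\gamma_k - \overline{\gamma}_k)^{2}}.
\end{equation*}
Then I would apply the weak law of large numbers: the denominator converges in probability to $\mathrm{Var}(\gamma_k)$, which is strictly positive by hypothesis, and the numerator converges to $\mathrm{Cov}(\gamma_k,\varepsilon_k)$. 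The footnote immediately following Assumption~\ref{ass:cov} already records that Assumption~\ref{ass:cov} implies $\mathrm{Cov}(\gamma_k,\varepsilon_k)=0$, so by Slutsky's theorem $\hat{\beta} \xrightarrow{p} \beta$.

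For part (2), I would start from the random-slope analogue $\tau_k = \delta_k + \beta_k \gamma_k$ and add and subtract both $\mathbb{E}[\delta_k]$ and $\mathbb{E}[\beta_k]\gamma_k$, producing
\begin{equation*}
\tau_k \;=\; \mathbb{E}[\delta_k] \;+\; \mathbb{E}[\beta_k]\,\gamma_k \;+\; (\beta_k - \mathbb{E}[\beta_k])\gamma_k \;+\; (\delta_k - \mathbb{E}[\delta_k]).
\end{equation*}
Plugging this into $\hat{\beta}$ yields $\mathbb{E}[\beta_k]$ plus two noise terms in the numerator. The $(\delta_k - \mathbb{E}[\delta_k])$ term is handled exactly as in part (1) via Assumption~\ref{ass:cov}. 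The new term, $K^{-1}\sum_k (\gamma_k - \overline{\gamma}_k)(\beta_k - \mathbb{E}[\beta_k])\gamma_k$, converges in probability to $\mathbb{E}[(\gamma_k - \mathbb{E}[\gamma_k])(\beta_k - \mathbb{E}[\beta_k])\gamma_k]$; and under $\beta_k \perp \gamma_k$ this expectation factors as $\mathbb{E}[\beta_k - \mathbb{E}[\beta_k]] \cdot \mathbb{E}[(\gamma_k - \mathbb{E}[\gamma_k])\gamma_k] = 0$. Slutsky's theorem then delivers $\hat{\beta} \xrightarrow{p} \mathbb{E}[\beta_k]$.

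The final claim that $\hat{\beta}\gamma_k$ consistently estimates $\eta_k$ follows by writing $\eta_k = \beta_k \gamma_k$ (or $\beta\gamma_k$ in case (1)): under independence, $\mathbb{E}[\eta_k\mid\gamma_k] = \gamma_k\mathbb{E}[\beta_k]$, so another application of Slutsky to $\hat{\beta}\gamma_k$ gives convergence to this target. None of these steps is technically hard — the proposition is essentially the textbook consistency result for simple OLS — so the only substantive content is making precise the translation between Assumption~\ref{ass:cov} (zero covariance of $\gamma_k$ with $\delta_k$) and the orthogonality condition used in OLS, and handling the random-slope case cleanly. The main delicate point is therefore ensuring that the independence assumption $\beta_k \perp \gamma_k$ in part (2) kills the additional bias term that would otherwise remain; without independence, only uncorrelatedness between $\beta_k$ and $\gamma_k^2$ (not $\gamma_k$) would suffice, which is a stronger-looking second-moment condition worth flagging.
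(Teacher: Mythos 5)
Your proof is correct and takes essentially the same route as the paper's own: substitute the decomposition $\tau_k=\mathbb{E}[\delta_k]+\beta_k\gamma_k+\varepsilon_k$ into the OLS formula, let the intercept drop out because the regressor is centered, and apply the law of large numbers with Slutsky, using Assumption \ref{ass:cov} to kill the $\varepsilon_k$ term and $\beta_k\perp\gamma_k$ to kill the random-slope term (the paper proves case (2) directly via $\mathbb{E}[\gamma_k^2\beta_k]-\mathbb{E}[\gamma_k]\mathbb{E}[\gamma_k\beta_k]=Var(\gamma_k)\mathbb{E}[\beta_k]$, which is algebraically equivalent to your centering of $\beta_k$). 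Only your closing aside is slightly imprecise: without independence the condition that suffices is $Cov\bigl(\beta_k,(\gamma_k-\mathbb{E}[\gamma_k])\gamma_k\bigr)=0$, not merely uncorrelatedness of $\beta_k$ with $\gamma_k^2$.
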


\begin{proof}
    All proofs are provided in the SI. 
\end{proof}

The estimator $\hat{\beta}$ is exactly the simple OLS estimator for the slope. The assumption $Var(\gamma_k)>0$ is technical; it guarantees that we have ``random'' observations of $\gamma$. The proposition indicates that if we assume the treatment effect on the mediator of interest is not correlated with the effects of other mechanisms, then $\beta$ can be consistently estimated. Combined with the information of $\gamma$, the overall average mediation effect $\mathbb{E}[\eta_k]$ can be estimated by $\hat{\beta} \sum_{k=1}^K \gamma_k \mathbb{P}(\gamma_k)$, where $\mathbb{P}(\gamma_k)$ can be consistently estimated by the proportion of sample size $k$ relative to the total sample size. The \href{https://github.com/Jiawei-Fu/mechte}{R package} will return all essential statistics. It is important to note that the estimates produced by other mediation analysis techniques, which do not account for heterogeneity, can be interpreted as average effects over implicit heterogeneous effects. 

\begin{remark}
    Whether $\beta$ is a constant can actually be tested, and we will demonstrate this in the application.
\end{remark}

\begin{remark}[$\beta_k \perp \gamma_k$]
    In the proposition \ref{prop:consist}, if $\beta_k$ varies with $k$, we need to assume $\beta_k \perp \gamma_k$ to ensure $\hat{\beta}$ is a consistent estimator of $\beta$. It may be a strong assumption in some cases. A simple solution is to use nonparametric regression to approximate the correlation between $\beta$ and $\gamma$. According to the Stone–Weierstrass theorem, every continuous function on a compact set can be uniformly approximated by a polynomial function. For example, consider a $p^{th}$ order polynomial, $\beta_k=\theta_0+\theta_1\gamma_k+\theta_2\gamma^2_k+...+\theta_p\gamma^p_k$. Substituting it in the model \eqref{equ:lm3} yields $\tau_k=\mathbb{E}[\delta_k]+(\theta_0+ \theta_1\gamma_k+...+\theta_p\gamma_k^p)\gamma_k+\varepsilon_k$. The parameter of interest is now characterized by $\theta_p$ rather than $\beta$.
\end{remark}

Readers may be concerned that we are assuming a linear relationship between $\tau_k$ and $\gamma_k$. Actually, we do not assume such linearity; $\tau_k=\mathbb{E}[\delta_k] + \beta\gamma_k + \varepsilon_k$ is the structural model derived from causal decomposition. There is an important distinction between this structural model and statistical linear regression models. First, in most cases, people assume the linear statistical relationship between data, that is, $\tau_k$ and $\gamma_k$ here. Nevertheless, our model $\tau_k=\mathbb{E}[\delta_k] + \beta\gamma_k + \varepsilon_k$ is naturally guaranteed by the nature of the causal effect. In the counterfactual framework, we can always additively decompose the total causal effect into two pieces. Second, in statistical applications, people assume the expectation of the error term in their population model is zero: $\mathbb{E}[\varepsilon_k]=0$. However, here, this property is guaranteed by construction, not by assumption: $\mathbb{E}[\varepsilon_k]=\mathbb{E}[\delta_k]-\mathbb{E}[\delta_k]=0$. Because of this property, in contrast to OLS, the unbiasedness of our estimator requires a slightly weaker mean independence assumption ($\mathbb{E}[\delta_k|\gamma_k]=\mathbb{E}[\delta_k]$), rather than the zero conditional mean assumption ($\mathbb{E}[\delta_k|\gamma_k]=0$). We summarize this result in SI \ref{app:unbias}.

What are the main advantages of our identification strategy? First, it does not require that mediator is ignorable. In other words, we allow unobserved confounders that simultaneously affect the mediator and the outcome variable (i.e., $U_2$ in the Figure \ref{fig:dag2}). As mentioned in the section \ref{sec:idenassump}, current methods cannot efficiently address this unconfoundedness problem without further assumptions. 


Second, we allow researchers to simultaneously estimate both treatment and mediation effects. The causal mediation, is simply a byproduct after identifying the treatment effects ($\tau$ and $\gamma$). We do not need other advanced techniques to identify the indirect effect except simple OLS. We will introduce exact estimation methods and research designs in the next section. We can use both aggregate-level data and individual-level data to get the causal mediation effect. Therefore, we believe our methods can be applied in a variety of empirical studies when sequential ignobility does not hold.

\section{How to Obtain Heterogeneous Effects?}\label{sec:des}

To implement the strategy, we need a random sample of treatment effects, $(\gamma_K,\tau_K)$. It is common for the same treatment to generate varying (average) treatment effects across different populations, such as those in different countries or areas. This is similar to getting multiple effects when conducting meta-analysis. Rigorously, we assume there are $k$ independent trials that generate $k$ study-level effects $(\gamma_k,\tau_k)$, where $\gamma_k$ is assumed to come from a super-population (normal) distribution. In the random-effects model, $\gamma_k= \mu + \Delta_k +e_k$, where $\mu$ is the overall effect, $\Delta_k$ is the deviation of $k's$-effect from the overall effect, and $e_k$ is the sampling error. As is standard in meta-analysis, we require that $\Delta_k$ be independent of study $k$, or exchangeable in the language of Bayesian Statistics \citep{higgins2009re}. This assumption is critical for statistical inference; therefore, we will proceed under the premise that it holds in the subsequent analysis.

Under the above model, practically, how can we find study-level effects $(\gamma_k,\tau_k)$? If the same research is conducted several times in different locations, we naturally obtain data from those studies. If not, it is still possible to `create' multiple studies within a single study. In this section, we introduce two general research designs.

\textbf{Multiple Treatment Meta Design.} The key idea is that we can adjust the treatment to induce heterogeneous effects. For example, in the Get-Out-The-Vote (GOTV) experiments, researchers could modify the treatment by having some groups receive one mail, while other groups receive two or three mails. See Figure \ref{fig:multiple_treat}. Given the same treatment, varying the intensity or magnitude of the treatment can likely induce different effects. Formally, we still use $G_k \in \{T_1,T_2,...,T_l\}$ to denote different sub-types of the treatment. If individual $i$ belongs $G_k=T_j$, it means the individual receives treatment intensity $T_j$.

\begin{figure}[!h]
    \centering
    \includegraphics[width=.6\textwidth]{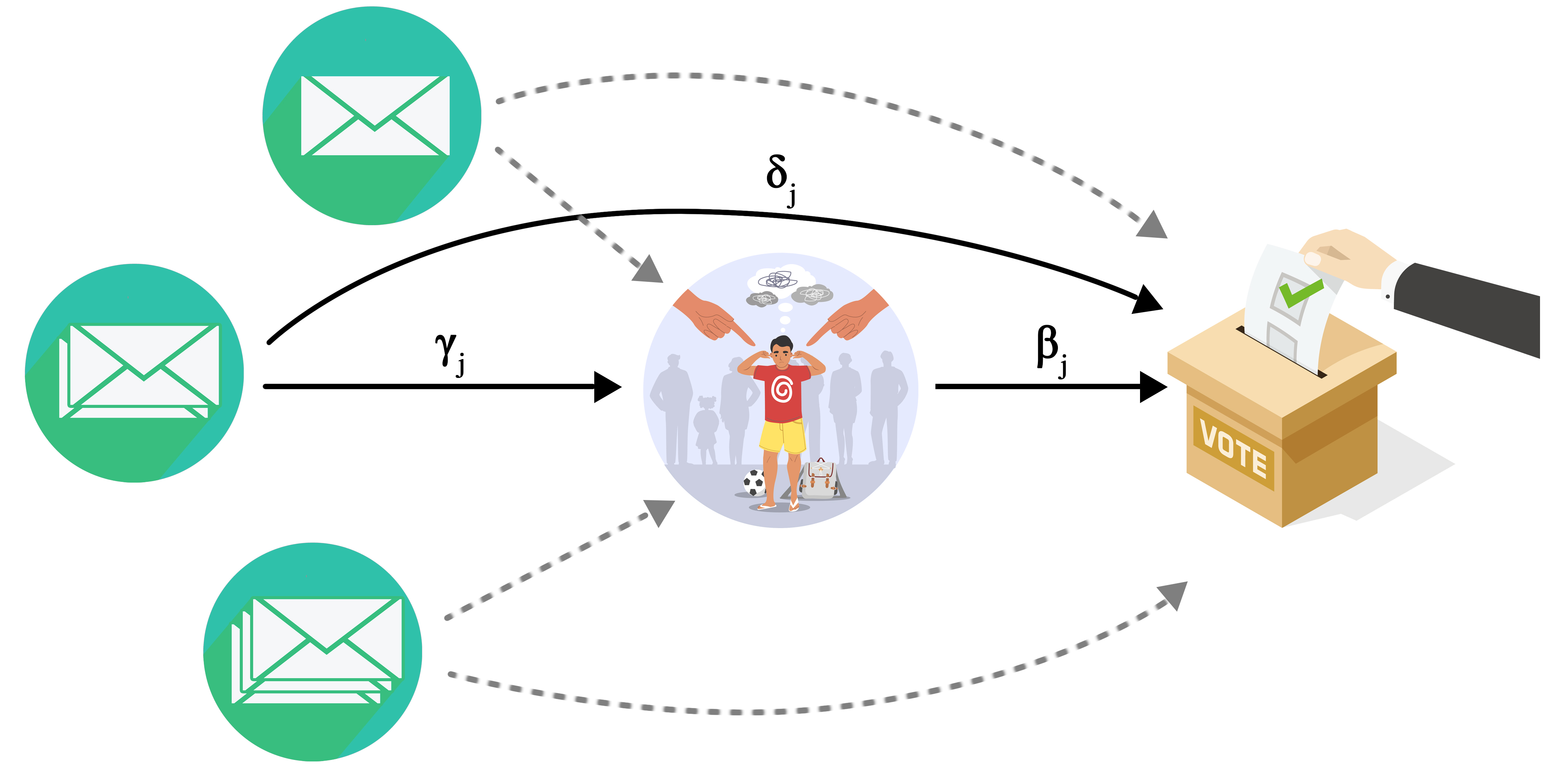}
    \caption{Multiple Treatment Meta Design. In experiment, we can adjust the treatment to induce heterogeneous effects.}
    \label{fig:multiple_treat}
\end{figure}

\textbf{Heterogeneous Subgroup Design.} This design exploits the heterogeneous effects from different population. Naturally, the same treatment may generate different (average) treatment effects for different population, for example, population in different country or area. To be concrete, for example, suppose researchers desire to understand how the mailing affects turnout through social pressure in the GOTV experiment \citep{gerber2008social}, as shown in Figure \ref{fig:hetero_popu}. Formally, each individual is characterized by a vector of pre-treatment covariates $X=(X_1,X_2,...,X_l)$ that can moderate treatment effects on the outcome and the mediator. We can subsequently define several subgroups $G_k$, where $k \in\{1,2,...,K\}$ according to $X$. Suppose $X_1$ is gender, and $X_2$ is age. We can define group $G_1=\{X_1=Male, X_2>30\}$, comprising individuals who are male and older than 30. Each individual $i$ should belong to only one group. An assumption is that for each group, treatment generates different and independent average treatment effects $\tau$ and $\gamma$. How should we identify these groups? If several similar studies are conducted in geographically different areas, similar to meta-analysis, then those  studies automatically generate a sample of effects. Alternatively, a data-driven method such as causal tree or forest can be used \citep{wager2018estimation} \footnote{Technically, the data obtained from such a design represent conditional average effects. The analysis remains valid under the assumption on $\beta$, as we take the average across subgroups.}.

\begin{figure}[!h]
    \centering
    \includegraphics[width=.6\textwidth]{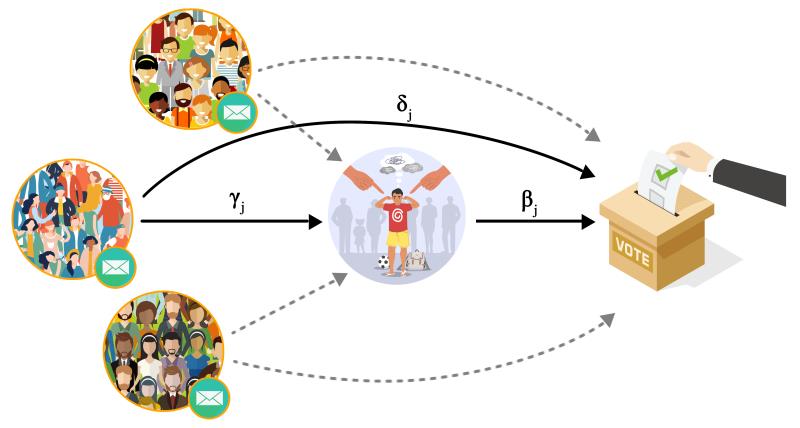}
    \caption{Heterogeneous Subgroup Design. Heterogeneous effects
can be derived from different population.}
    \label{fig:hetero_popu}
\end{figure}


\textbf{Synthetic Methods.} People may incorporate these two designs and define finer subgroups. The incorporated subgroup $G_k = \{T\in \textbf{T},X_1 \in \textbf{X}_1,X_2 \in \textbf{X}_2,...,X_l\in \textbf{X}_l\}$ is defined by treatment types and covariates, where $\textbf{X}_l$ denotes a set possible values of $X_l$. For example, in the GOTV design, for each intensity of treatment, we can find subgroups defined by covariates. A possible subgroup could be $G_k=\{Single Mail, X_1=Male, X_2>30\}$. If individual $i$ is in this group, it implies that individual $i$ is male, older than 30, and receive treatment phone call.

Regardless of the method used, the objective is the same: to obtain data suitable for regression analysis. Our goal is not to identify the true, unobserved subgroups. Rather, we seek to generate variation in the estimated effects $\hat{\gamma}_k$, which—by the causal decomposition—correspond to associated estimates $\hat{\tau}_k$. If observations are partitioned into subgroups at random, the resulting $\hat{\gamma}_k$ are likely to be similar across $k$. Although such estimates can still be used, the limited variation in $\hat{\gamma}_k$ leads to large standard errors. We therefore prefer subgroup constructions that induce substantial heterogeneity across $k$. Importantly, these subgroups need not correspond to the true underlying subpopulations. Identifying the correct latent subgroups is an interesting and important research problem in its own right, but it is not required for our approach.

\section{Estimation and Inference}\label{sec:infer}

Once we obtain the heterogeneous effects $(\hat{\gamma}_k,\hat{\tau}_k)$, we are ready to estimate $\beta$ and, consequently, the causal mediation effects. Because the second-stage regression uses estimated group-level effects,
\((\widehat\tau_k,\widehat\gamma_k)\), rather than the infeasible quantities
\((\tau_k,\gamma_k)\), the estimator is a generated-regressor estimator. To ensure that the first stage has no first-order effect on the second stage, we require the first-stage heterogeneous effects $(\hat{\gamma}_k,\hat{\tau}_k)$ to be regular \(n_k^{1/2}\)-consistent treatment-effect estimators. Formally, for each group $k$, let $\widehat\gamma_k = \gamma_k + u_{k}$, and $\widehat\tau_k = \tau_k + v_{k}$, where $u_k$ and $v_k$ denote the estimation errors.

\begin{assumption}\label{ass:first_stage_rate}
    $\sum_{k=1}^K(u^2_k+v^2_k)=O_p(\sum_{k=1}^K\frac{1}{n_k})$ . 
\end{assumption}

If the groups are identified ex ante, then this rate condition is generally satisfied under standard regularity conditions. When subgroups are selected using a data-driven method, honesty or sample splitting provides a convenient sufficient condition. Specifically, we partition the data into two disjoint subsamples, $I_1 \cup I_2$. The first subsample $I_1$ is used to identify the $K$ subgroups, while the second subsample $I_2$ is used to estimate $(\hat{\gamma}_k,\hat{\tau}_k,\hat{\sigma}^2_{uk})$, where $\hat{\sigma}^2_{uk}$ denotes the estimated asymptotic variance of $\hat{\gamma}_k$. In practice, $k$-fold cross-fitting is recommended, as is standard in the current DML literature. This separation ensures that,
conditional on the selected partition, the group-level estimators behave like standard treatment-effect estimators.

\begin{proposition}\label{prop:twostep_ols}
Suppose $(\tau_k,\gamma_k)$ satisfies the decomposition \eqref{equ:lm3}: $\tau_k = \mathbb{E}[\delta_k]+\beta\gamma_k+\varepsilon_k$.
Define the plug-in second-stage OLS estimator
\begin{equation}
    \widehat\beta_K
    =
    \frac{
        \sum_{k=1}^K
        (\widehat\gamma_k-\bar{\widehat\gamma})
        (\widehat\tau_k-\bar{\widehat\tau})
    }{
        \sum_{k=1}^K
        (\widehat\gamma_k-\bar{\widehat\gamma})^2
    },
    \label{eq:plugin_ols}
\end{equation}
where $
    \bar{\widehat\gamma}
    =
    \frac{1}{K}\sum_{k=1}^K \widehat\gamma_k$ and $\bar{\widehat\tau}
    =
    \frac{1}{K}\sum_{k=1}^K \widehat\tau_k$. Suppose $Q := \mathbb E[(\gamma_k-\mathbb{E}[\gamma_k])^2] > 0$, and that
$\Omega := \mathbb E[(\gamma_k-\mathbb{E}[\gamma_k])^2\varepsilon_k^2] < \infty$. Assume that Assumption \ref{ass:cov} and \ref{ass:first_stage_rate} hold. Define $V_\beta=\frac{\Omega}{Q^2}$. If the rate condition $\sum_{k=1}^K \frac{1}{n_k} \to 0$ holds, then
    \begin{equation}
        \sqrt K(\widehat\beta_K-\beta)
        \xrightarrow{d}
        \mathcal N\left(0, V_\beta\right)
        \label{eq:asymptotic_normality}
    \end{equation}
\end{proposition}

The rate condition $\sum_{k=1}^K \frac{1}{n_k} \to 0$ ensures that this total first-stage noise vanishes asymptotically. In the balanced case, the condition $ \sum_{k=1}^K \frac{1}{n_k} \to 0$ reduces to \(K/n_k\to 0\). In general, this condition requires that the sample size within each group grow sufficiently fast so that the accumulated first-stage estimation error disappears asymptotically. A consistent heteroskedasticity-robust estimator of \(V_\beta\) is
\begin{equation}
    \widehat V_\beta
    =
    \frac{
        K^{-1}\sum_{k=1}^K
        (\widehat\gamma_k-\bar{\widehat\gamma})^2
        \widehat\varepsilon_k^2
    }{
        \left[
            K^{-1}\sum_{k=1}^K
            (\widehat\gamma_k-\bar{\widehat\gamma})^2
        \right]^2
    },
    \label{eq:robust_vbeta}
\end{equation}
where $
    \widehat\varepsilon_k
    =
    \widehat\tau_k-\widehat\alpha_K-\widehat\beta_K\widehat\gamma_k$, and $ \widehat\alpha_K
    =
    \bar{\widehat\tau}-\widehat\beta_K\bar{\widehat\gamma}$.
    
    Proposition \ref{prop:twostep_ols} is stated under this asymptotic framework, in which subgroup sizes grow such that the first-stage estimation error in \(\widehat\gamma_k\) and \(\widehat\tau_k\) vanishes. In finite samples, however, when group sizes are small, one may still be concerned about measurement-error attenuation arising from first-stage estimation noise. In such settings, we suggest using the well-known simulation--extrapolation (SIMEX) method \citep{cook1994simulation}. In SI \ref{si:twostep}, we generalize and adapt standard asymptotic normality results to our setting, which involves two-step estimation with non–i.i.d.\ data.

Using either approach, we obtain a well-behaved estimator for $\beta$. Our primary estimand, however, is the overall average indirect effect, defined as the weighted average of $\beta_k \gamma_k$ across the $k$ subgroups. The variance of the estimator can be derived through Delta method. As shown in SI \ref{si:twostep}, under further sample splitting, the asymptotic variance equals $(\frac{1}{K}\sum_{k=1}^K \hat{\gamma}_k)^2V_{\beta}$ under a further sample splitting. To achieve better finite-sample performance, we instead employ an intersection–union test. For simplicity, let $\lambda_k$ denote the population proportion associated with $\gamma_k$, and let $\hat{\lambda}_k$ represent its consistent estimator.
The following proposition summarizes how to test the null effect of the overall average indirect effect based on $p$-values. Let $\hat{\sigma}_{\beta}$ be the standard error. $\Phi_z$ is the cumulative distribution function of the standard normal distribution.

\begin{proposition}\label{prop:inference}
Assuming the conditions in the proposition of \ref{prop:meta} hold, let $p_{\beta}=2*\Phi_Z(-|\frac{\hat{\beta}}{\hat{\sigma}_{\beta}}|)$ and $p_{\gamma}=2*\Phi_Z(-|\frac{\sum_{k=1}^K\hat{\gamma}_k \hat{\lambda}_k}{\sqrt{\sum_{k=1}^K \hat{\lambda}^2 \hat{\sigma}^2_{uk}}}|)$. 

Consider the null hypothesis that the overall average indirect effect is zero. 
Then, rejecting the null hypothesis if $p_{\beta}\leq \alpha$ and $p_{\gamma}\leq \alpha$ constitutes an asymptotic level $\alpha$ test.
\end{proposition}

In fact, $p_{\beta}$ and $p_{\gamma}$ are $p$-values for two sub-parts in the intersection-union test. From the above proposition \ref{prop:inference}, it follows that the overall $p$-value associated with the null effect of the overall indirect effect is defined as $\max [p_{\beta}, p_{\gamma}]$. The next proposition provides an conservative confidence interval based on the intersection-union test. We use $\hat{\gamma}_0$ to denote $\sum_{k=1}^K\hat{\gamma}_k \hat{\lambda}_k$.

\begin{proposition}\label{prop:inference2}
    Assuming the same conditions in the proposition of \ref{prop:inference} hold. Define   
    $$
    \begin{aligned}
        a_1 &= \hat{\gamma}_0 - \Phi^{-1}_Z (\frac{1+\sqrt{1-\alpha}}{2})* \sqrt{\sum_{k=1}^K \hat{\lambda}^2 }\hat{\sigma}^2_{uk}, \;\;\;\
        a_2=\hat{\gamma}_0 + \Phi^{-1}_Z (\frac{1+\sqrt{1-\alpha}}{2})* \sqrt{\sum_{k=1}^K \hat{\lambda}^2 \hat{\sigma}^2_{uk}}\\
        a_3&=\hat{\beta} - \Phi^{-1}_Z (\frac{1+\sqrt{1-\alpha}}{2})* \hat{\sigma}_{\beta},\;\;\;\;
        a_4=\hat{\beta} + \Phi^{-1}_Z (\frac{1+\sqrt{1-\alpha}}{2})* \hat{\sigma}_{\beta},
    \end{aligned}
    $$ $\overline{a}=\max[a_1a_3,a_1a_4,a_2a_3,a_2a_4]$, and $\underline{a}=\min[a_1a_3,a_1a_4,a_2a_3,a_2a_4]$.

The at least $(1-\alpha)\%$ confidence interval for overall average indirect effect is given by $[\underline{a},\overline{a}]$, ensuring $\mathbb{P}[\underline{a} \leq \mathbb{E}[\eta_k] \leq \overline{a}] \ge 1-\alpha$, asymptotically.
 
\end{proposition}


Based on both propositions, conducting sub-group inference is also straightforward. We encapsulate estimation and inference in the \href{https://github.com/Jiawei-Fu/mechte}{R package}.
In the package, we also include Cochran's Q and Higgins \& Thompson’s $I^2$ from meta-analysis to assess whether $\gamma_k$ values exhibit true heterogeneity rather than random error.

\section{Simulation and Application}\label{sec:appli}

In this section, we employ Monte Carlo simulations to evaluate the effectiveness of our methods by comparing them with the current methods under sequential ignorability. Furthermore, we apply our methodology to real data from two distinct experiments—one using aggregate data and the other using individual-level data—to illustrate its application in real studies.

\subsection{Simulation}\label{sec:sim}

We generate heterogeneous treatment effects for each individual $i$ by assuming $10$ subgroups using the following simple model:

$$
\begin{aligned}
M_{ki} &= 1 + \gamma_k*T_i+ \omega *u_i+ \epsilon^M_i\\
Y_{ki} &=1+T_i+\beta M_{ki}+ \omega*u_i + \epsilon^Y_i
\end{aligned}
$$ Treatment $T_i$ is randomly drawn from a standard normal distribution, along with two error terms $\epsilon$. The average treatment effect on the mediator $M_i$ varies across groups, with $\gamma_k \in \{1,2,3,4,5,6,7,8,9,10\}$. The effect of the mediator on the outcome is $\beta$. We also consider an `unobserved' confounder $u_i$ that simultaneously affects the mediator and outcome, with magnitude $\omega \in \{0,0.5,1,1.5,2\}$. 

We first set $\beta=0$ so that the AMCE is zero. In Table \ref{tab:sim2}, we report the empirical distribution of p-values and the $95\%$ confidence intervals (averaged over simulations) for both methods across different values of $\omega$. Confidence intervals are constructed using the method introduced in Section \ref{sec:infer}. It is evident that when $\omega=0$, the sequential ignorability assumption holds. In the first row, therefore, both methods yield estimates that are very close to the theoretical AMCE of 0. As $\omega$ increases, the sequential ignorability assumption is violated. Our method remains robust to this violation, in the sense that the confidence interval continues to cover 0 and the Type I error rate in the first column remains close to the nominal level. However, for methods relying on sequential ignorability, the rejection rate increases and the confidence interval no longer covers the true value of zero.

\begin{table}[!ht]
\centering
\begin{tabular}{c|ccc|cccc}
\hline 
\hline 
   & \multicolumn{3}{c|}{HTE Method } &\multicolumn{3}{c}{Traditional Method}\\
  $\omega$ & \multicolumn{3}{c|}{with SIMEX} &\multicolumn{3}{c}{ with Sequential Ignorability}\\
    \hline
 & P(p$<$0.05) & CI low & CI up & P(p$<$0.05) & CI low & CI up  \\ 
0.00 & 0.05 & -0.18 & 0.19 & 0.03 & -0.15 & 0.16 \\ 
  0.50 & 0.07 & -0.20 & 0.23 & 0.35 & -0.02 & 0.32 \\ 
  1.00 & 0.06 & -0.26 & 0.31 & 1.00 & 0.34 & 0.75 \\ 
  1.50 & 0.05 & -0.32 & 0.41 & 1.00 & 0.84 & 1.32 \\ 
  2.00 & 0.06 & -0.39 & 0.53 & 1.00 & 1.39 & 1.93 \\ 
   \hline
   \hline 
\end{tabular}
\caption{The table reports averages across simulations. Column $P(p<0.05)$ reports the empirical proportion of p-values below 0.05. When $\omega = 0$, both the HTE method and the sequential ignorability method yield valid estimates. When $\omega \neq 0$, the sequential ignorability assumption is violated. In this case, we observe that the confidence interval of the HTE method still covers the true value of 0, and the Type I error rate is close to the nominal level.}
    \label{tab:sim2}
\end{table}

Next, we study the power of our method. To do so, we increase $\beta$ so that the AMCE changes from 10\% to 30\% of the ATE. We do not further increase this percentage because the power already reaches 1. As shown in Figure \ref{fig:mech_power}, when the AMCE exceeds 20\% of the ATE, the power of our method reaches approximately 80\%.This suggests that our method has strong power to detect relatively small AMCEs in practice.

\begin{figure}[!h]
     \centering
 \includegraphics[width=.6\textwidth]{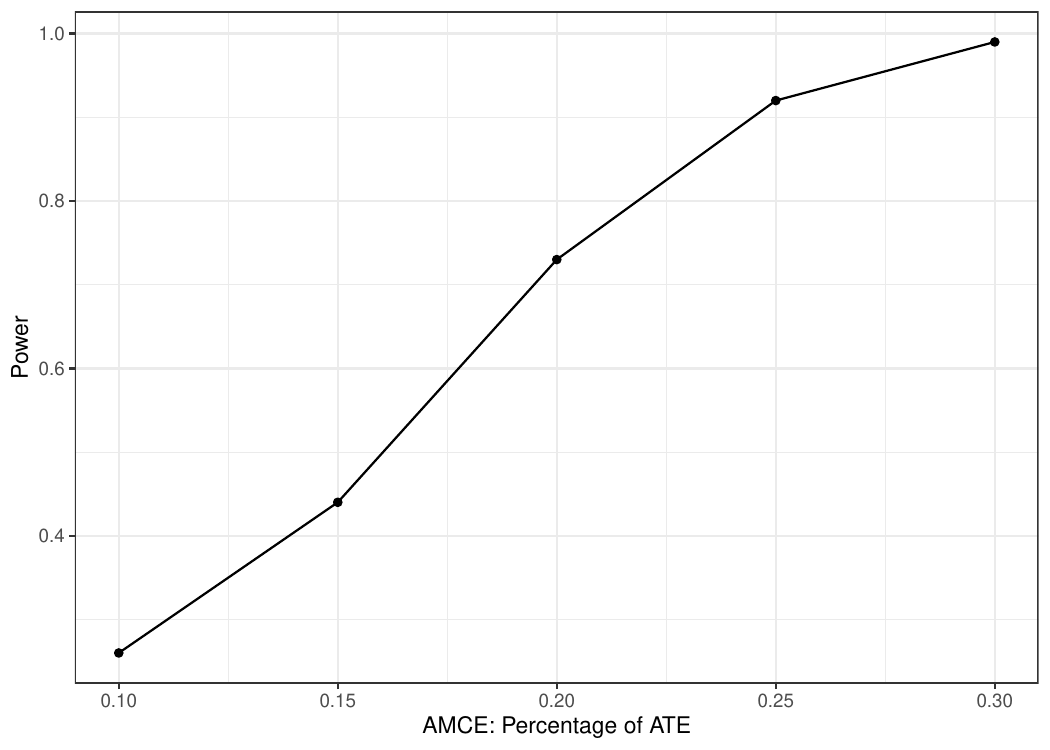}
     \caption{The figure shows the statistical power as the AMCE increases from 10\% to 30\% of the average total treatment effect. The power quickly reaches 1 when the AMCE is around 30\% of the total effect.}
     \label{fig:mech_power}
 \end{figure}

We also explore how the number of groups and the group size influence statistical power using a similar DGP as before. The results are shown in Figure \ref{fig:power}. In general, both increasing the number of groups and increasing the group size enhance statistical power. However, in this simulation, increasing the number of groups has a more pronounced effect than increasing the group size. This suggests that heterogeneity provides more information than merely increasing the precision of the estimates. In practice, researchers are advised to conduct similar power analyses to optimize their research designs.



\subsection{Application I: Governance on Resources (Aggregate level Data)}

Evidence in Governance and Politics (EGAP) \footnote{https://egap.org/} funds and coordinates multiple field experiments on different topics across countries. This collaborative research model is called ``Metaketa Initiative.'' In Metaketa III, they examine the effect of community monitoring on common pool resources (CPR) governance. To causally answer this question, \citet{slough2021adoption} conducted six harmonized experiments with the same `meta' treatment (community monitoring) but heterogeneous CPRs and treatment sub-types, as shown in the table \ref{tab:exp_sum}.

\begin{table}[!h]
    \centering
    \resizebox{\textwidth}{!}{
    \begin{tabular}{c|cccccc}
         & Brazil &China &Costa Rica&Liberia&Peru&Uganda \\ \hline
    Resource  & Groundwater & Surface water & Groundwater & Forest & Forest & Forest \\
         
      \textbf{Components of treatment}   & &&&&&\\
      Community workshops & \checkmark & - &\checkmark &\checkmark & \checkmark & \checkmark \\
      Monitor selection, training,incentives & \checkmark & \checkmark &\checkmark &\checkmark & \checkmark & \checkmark\\
      Monitoring of the resource &\checkmark & \checkmark &\checkmark &\checkmark & \checkmark & \checkmark\\
      Dissemination to citizens &\checkmark & \checkmark &\checkmark &\checkmark & \checkmark & \checkmark\\
      Dissemination to management bodies &- & (alternative arm) &\checkmark &\checkmark* & \checkmark* & \checkmark*\\ \hline
    \end{tabular}
    }
    \caption{Summary Table from \citet{slough2021adoption}. * In the forest studies, the community constitutes at least one of the possibly overlapping management bodies.}\label{tab:exp_sum}
\end{table}

In their study, the authors report effects on multiple outcome variables, which include resource use, user satisfaction, user knowledge about community’s CPRs, and resource stewardship. They also investigate the underlying mechanism: how monitoring affects those outcomes through different channels. However, their analysis is limited to examining the treatment effects on mediators, which does not necessarily delineate the precise causal mediation effects. We intend to use their data to illustrate how to apply our causal mediation analysis with aggregate-level data; to be specific, we ask ``How does the treatment (i.e. monitoring) affect user knowledge about community CPRs through altered perceptions of sanction likelihood for CPR misuse?"

The six experiments naturally provide us with six subgroups. While it is feasible for researchers to further segment subgroups within each experimental site, our focus will be on these six primary subgroups. To estimate ACME, we require specific data: (1) the average treatment effects on both the outcome and the mediator, and (2) the standard errors associated with these effects. These data points are represented in the two-dimensional Figure \ref{fig:cmp_est}, where each dot represents an estimate and each line indicates the $95\%$ confidence interval (CI). Generally, most confidence intervals cross zero, particularly for the treatment effects on the mediator (sanction), as illustrated in Figure \ref{fig:ori}. Here, no single estimate significantly deviates from zero at the 0.05 level.

However, the data reveal a clear pattern: an increased treatment effect on the perception of sanctions correlates with an increased effect on knowledge about the resource. To quantify the mediation effect, we estimate $\beta$, with the estimated slope being $\beta = 0.68$ and the $p$-value =0.049. This estimate is also depicted by the green line in Figure \ref{fig:cmp_est}. The figure is also useful for determining whether $\beta$ is constant. If it is indeed constant, we should expect the data points to display a relatively linear relationship. As previously mentioned, in the structural approach to mediation analysis, we can generally interpret $\beta$ as the effect of mediator (sanction) on the outcome (knowledge). This confirms our expectation that an increased perception of sanctions enhances the incentive to gain more knowledge about common-pool resources.

\begin{figure}[!tbp]
  \centering
  \begin{minipage}[b]{0.4\textwidth}
    \includegraphics[width=\textwidth]{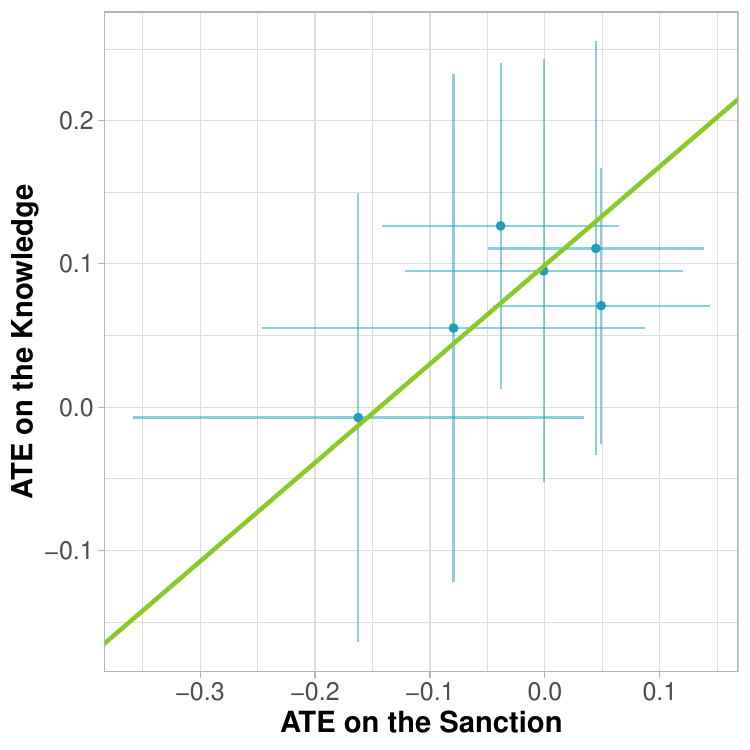} 
    \caption{Average treatment effects on the mediator and outcome with standard errors in Six experiments.}\label{fig:cmp_est}
  \end{minipage}
  \hfill
  \begin{minipage}[b]{0.4\textwidth}
    \includegraphics[width=\textwidth]{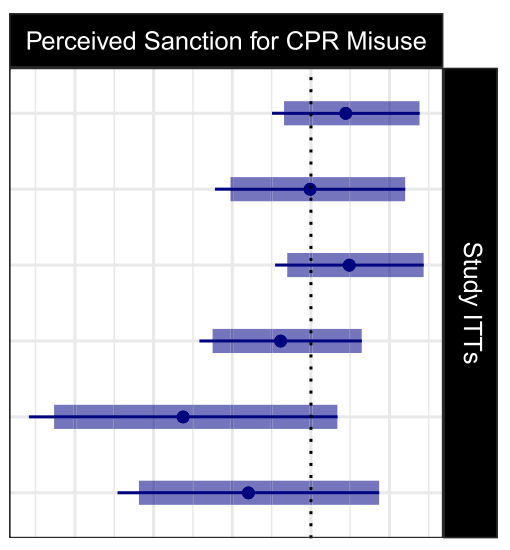}
    \caption{The original figure from \citet{slough2021adoption} on the average treatment effects on the mediator.} \label{fig:ori}
  \end{minipage}
\end{figure}


To obtain the average causal mediation effect, we need to multiply $\beta$ by $\gamma$ (the ATE on the mediator) and possibly weight this by the sample size in each experiment. We find that the estimated ACME is $0.01$. Since all $\gamma$ in six experiments are not statistically significant at $0.05$ level, it is also challenging to achieve a significant ACME. However, the positive ACME does suggest that the monitoring enhances knowledge about CPRs can be explained by a mediation effect through the perception for CPR misuse.

\subsection{Application II: Information Effect on Voting (Individual-level data)}

Accountability is a cornerstone of democracies and is fundamental to good governance. However, in reality, voters often lack sufficient information about politicians' performance. Many organization and civil society groups have dedicated efforts to disseminate such information to the electorate. A pivotal question arises: ``Do informational interventions influence voters' behaviors and thereby promote accountability? If yes, what is the key mechanism?" Numerous field and survey experiments have sought to quantify this treatment effect; yet the findings are inconclusive \citep[see][]{incerti2020corruption,dunning2019voter}.  
Furthermore, our understanding of how information influences voting behavior is limited. In a few experiments, researchers have measured intermediate outcomes and explore potential mechanisms. Nevertheless, these intermediate outcomes are not ignorable given the treatment status, making the estimation of mediation effects challenging.  In this section, we will illustrate the use of our method in one field experiment from Benin, demonstrating how to identify and estimate the causal mediation effect in an information experiment using individual-level data.

Around 2015 National Assembly elections in Benin, as a part of Metaketa I, \citet{adida2019under} randomly disseminated information about the performance of incumbent legislators to voters through videos. These videos provided official data on four key performance dimensions: (1) attendance rate at legislative sessions, (2) frequency of posing questions during these sessions, (3) committee attendance rate, and (4) productivity of committee work. One of the primary outcome variables was individual voting choice, which was captured via baseline and endline surveys. The surveys also gathered intermediate variables, such as voters' perceptions of the incumbents' effort/hardworking. Overall, the intervention did not significantly affect the incumbents' vote shares, aligning with the results from most other field experiments \citep{dunning2019voter}. However, a subsequent meta-analysis highlighted a notable correlation between voters' perceptions of effort and support for incumbents \citep[p354]{dunning2019information}. As emphasized by authors, this correlation does not illuminate any causal relationship due to the design. Nevertheless, it indicates a potential indirect effect of information on voting behavior mediated by perceptions of hard work. Thus, we intend to apply our method to their individual-level data to directly estimate this causal mediation effect.

To apply our method to individual-level data, the initial step involves identifying potential heterogeneous subgroups. This identification can be achieved through data-driven methods. We employ the widely-used causal tree approach to detect these subgroups, estimating the heterogeneous treatment effect on the mediator (effort) using individual pre-treatment covariates, such as age, gender, wealth, and political attitudes.\footnote{For details, please refer to the replication materials.} As depicted in Figure \ref{fig:subgroup}, the informational effect on the perception of effort varies according to factors like coethnic, education, age, wealth status etc. With different algorithms, the identified subgroups may vary. For example, we can obtain alternative subgroup classifications, as shown in the SI \ref{sec:app_ind}. However, if the identification assumptions hold, $\gamma_k$ and $\tau_k$ should capture the same underlying information, and thus the estimate of $\beta$ remains nearly identical. This finding is also empirically confirmed.

\begin{figure}
    \centering
    \includegraphics[width=0.9\textwidth]{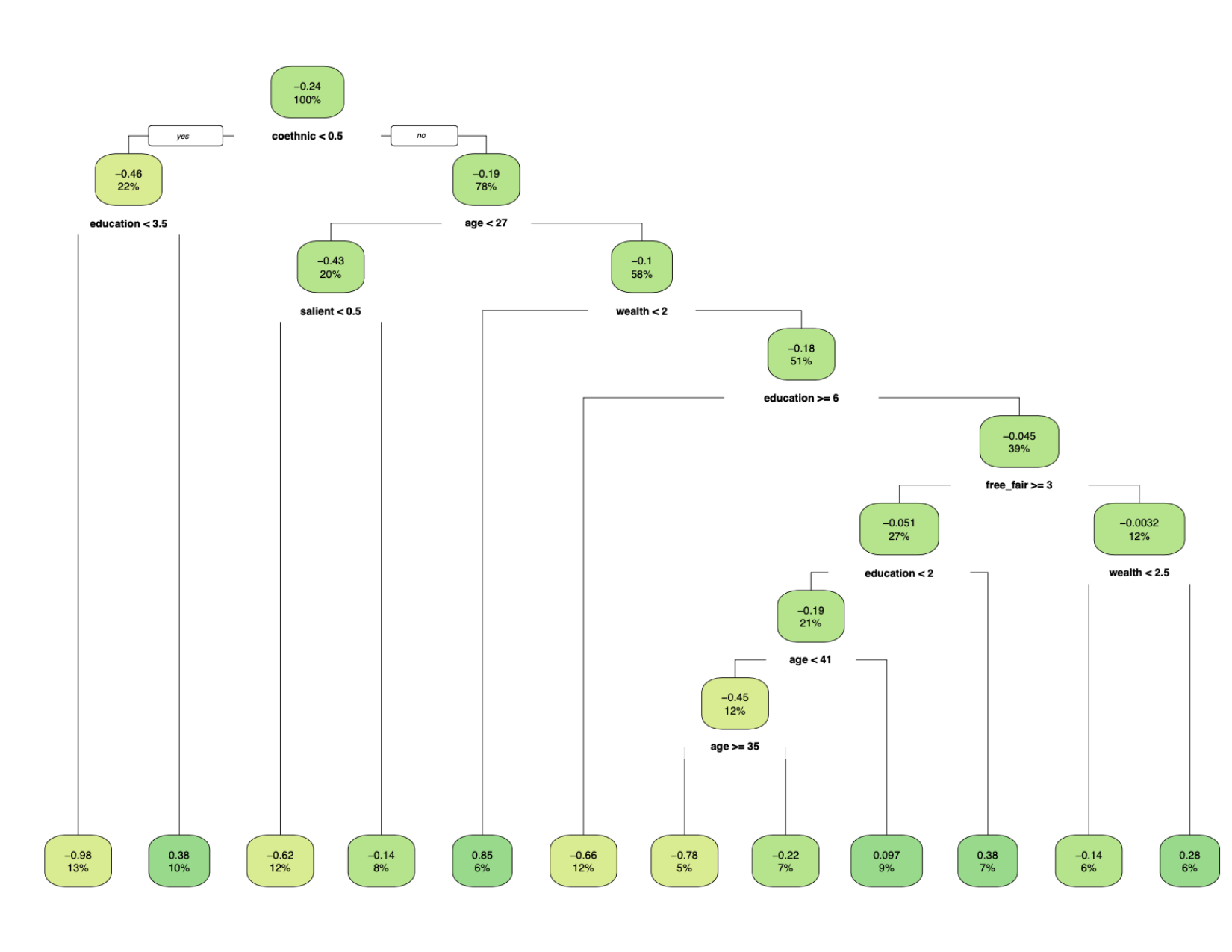}
    \caption{Subgroups detected by Causal trees}
    \label{fig:subgroup}
\end{figure}

\begin{figure}
    \centering
    \includegraphics[width=0.7\textwidth]{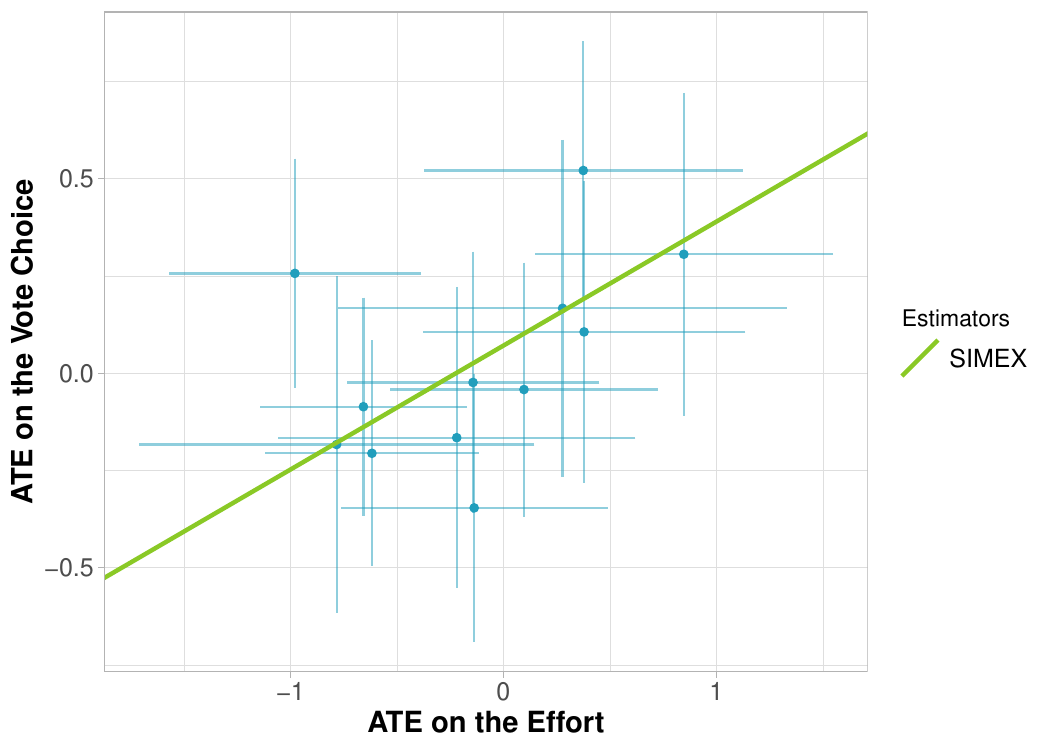}
    \caption{The figure plots the average treatment effects on the mediator (horizontal axis) and the outcome (vertical axis) across subgroups. Under a constant $\beta$, the dots are expected to lie on a straight line. The green line depicts the fitted $\beta$.}
    \label{fig:sub_use}
\end{figure}

Next, we estimate the average treatment effect on vote choice across various subgroups, focusing on the ``bad information'' arm, where the information reveals poor performance by the incumbent. These estimates are then used to calculate the indirect effect using SIMEX, treating them as aggregate-level data. Before that, it is helpful to plot the data and check for a clear linear relationship. If the data show linearity, it provides evidence that $\beta$ is constant, and we do not need to account for non-linearity. The final results are illustrated in Figure \ref{fig:sub_use}. We found a clear linear trend and that the estimated $\beta$ is 0.3, significant at the 0.1 level according to the SIMEX analysis. This suggests that the perception of high effort by the incumbent increases potential votes. The average causal mediation effect is 
$-0.08$, also significant at the 0.1 level. Consequently, we deduce that while the overall effect of information may not be substantial enough to detect in the field experiment, there is a significant indirect effect of bad information through voters' perceptions of politicians' effort. Specifically, bad information leads to fewer votes for the incumbent due to perceived lower effort. However, no significant results were detected for the ``good information'' arm, although the average mediation effect is positively aligned with our expectations.

\section{Discussion and Extension} \label{sec:ext}

Our identification strategy hinges on a pivotal assumption: $Cov(\gamma_k,\delta_k)=0$, implying that the treatment effect on the mediator is not correlated with other mechanisms. How can we assess this assumption in practice? The new identification strategy provides two opportunities to address the identification problem—during the design stage and the data analysis stage.

First, researchers can assess this assumption during the design stage, before any experiments or data collection. (1) This can be achieved through a theoretical examination. Understanding the underlying mechanisms necessitates a theoretical foundation. For instance, considering the voter turnout example in section \ref{sec:model}, theory identifies four major mechanisms that may influence a voter's decision to abstain or vote: relative payoff from the favored candidate, civic duty, the probability of being pivotal, and voting cost. If theory suggests these mechanisms are uncorrelated, then our assumption is valid. In the turnout example, most theories propose that these mechanisms represent \textit{distinct} aspects of voting calculus and lack evident correlation. For example, the probability of being pivotal is determined by the size of the population, whereas civic duty encompasses all moral considerations, which are unlikely to influence preferences for the candidate. These considerations are reflected in the formula $p_i b + D - C > 0$, where additivity implies a form of independence among the mediators. Given a clear treatment, if the mediators capture distinct mechanisms, the identification assumptions are more likely to hold. (2) If conducting an experiment, researchers can control the treatment to minimize potential correlations between $\gamma$ and $\delta$. For instance, when varying treatment intensity, it is advisable to limit the treatment elements to prevent correlations with other mechanisms.


Second, in the data analysis phase, we can leverage our novel decomposition, which takes the form $\tau_k=\mathbb{E} [\delta_k] + \beta \gamma_k +\epsilon_k$. This formulation highlights a close analogy to the traditional linear regression model. Should the assumption fail, variables denoted by $X$, moderating both the treatment effects on the mediator and other mechanisms, may exist. Thus, akin to the approach in traditional linear regression, we \textit{control} for $X$ to `purify' the error term. As our analysis is based on aggregate-level data (i.e., expected treatment effects), we need to account for $\mathbb{E}[X_k]$, the average of $X_k$ in the respective groups, in the linear regression model. More advanced and robust techniques, such as double machine learning, can also be employed. Cross-fitting is additionally recommended to reduce overfitting.


Owing to the challenge of obtaining a large number of observations of average treatment effects $(\gamma_k,\delta_k)$ in practice, indiscriminately adding all covariates in the linear regression model without careful consideration is not advised. This underscores the crucial role of theory in mediation analysis. The decision to include a covariate hinges on theoretical considerations regarding its potential to modify other mechanistic pathways. With theory, researchers can readily identify the most `important' omitted moderators, as measured by $R^2$, and incorporate them into the regression to enhance the robustness of their estimates \citep{oster2019unobservable}. With the extension to linear regression, researchers may also apply traditional \textit{sensitivity analysis} to evaluate the robustness of their estimates against unobserved confounders \citep[e.g.][]{cinelli2020making}.

\section{Conclusion}

Causal mediation analysis is inherently challenging. We hope our method provides an additional useful tool for applied researchers who seek to better understand causal mechanisms. In this study, we propose an alternative identification assumption and strategy that can enable researchers easily estimate causal mediation effects. The method converts the intricate mediation problem into a simple linear regression problem. Based on the isolated mechanism assumption, once researchers identify the treatment effects on the mediator and the outcome, our approach can consistently estimate the indirect effect. The proposed strategy enables researchers to address the identification problem in both the design and analysis stages.


While our method reduces the reliance on unconfoundedness assumptions, it may require greater data collection efforts to obtain precise estimates. Thankfully, in the era of big data, this should not be a major problem nowadays. Moreover, it is important to recognize that researchers should select appropriate mediation analysis techniques based on the specific design of their studies. All methods come with their own set of assumptions. There also remain several avenues for future research. Questions like post-treatment confounders, extending the method to non-binary treatments, identifying necessary assumptions for correlated mechanisms or integrating other causal identification strategies beyond instrumental variables for individual-level estimators remain open. Lastly, our method suggests a promising avenue to bridge causal mediation with causal moderation, indicating the potential for discovering other effective methodological combinations.

\newpage
\renewbibmacro{in:}{}
\printbibliography[]


\clearpage
\setcounter{page}{1}

\appendix
\addcontentsline{toc}{section}{Appendix} 
\part{Supplementary Information} 
\parttoc 

\setcounter{figure}{0}
\setcounter{table}{0}
\setcounter{definition}{0}
\setcounter{assumption}{0}
\renewcommand\thefigure{A.\arabic{figure}}
\renewcommand\thetable{A.\arabic{table}}
\renewcommand{\theassumption}{\Alph{section}\arabic{assumption}}

\clearpage

\section{More on Decomposition}\label{si:decomp}
It is evident that multiple versions of natural direct and indirect effects exist. The main concern is the interaction effect between the treatment and the mediator. For natural direct effect $\delta^i(t) = Y^i(1,M^i(t))-Y^i(0,M^i(t))$ and natural indirect effect $\eta^i(t)=Y^i(t,M^i(1))-Y^i(t,M^i(0))$, the value of $\delta^i(t)$ and $\eta^i(t)$ may depend on $t$, if there exists an interaction effect. In the binary treatment case ($t=1$ and $t'=0$), they have specific names \citep{robins1992identifiability}:

(1) The ``pure'' effect implies that no interaction effect is picked up. We call $\delta^i(0)= Y^i(1,M^i(0))-Y^i(0,M^i(0))$ the \textit{pure direct effect}, where the mediator is set to the value it would have been without treatment. Additionally, $\eta^i(0) = Y^i(0,M^i(1))-Y^i(0,M^i(0))$ is the \textit{pure indirect effect} where treatment is set to absent.

(2) The ``total'' effect captures the interaction effect. Therefore the \textit{total direct effect} is defined as $\delta^i(1)= Y^i(1,M^i(1))-Y^i(0,M^i(1))$, where the mediator takes the potential value if the treatment is on. Similarly, the \textit{total indirect effect} $\eta^i(1) = Y^i(1,M^i(1))-Y^i(1,M^i(0))$ set treatment to present.

Together, we obtain two different decompositions:
\begin{equation}\label{equ:twodecomp}
    \begin{aligned}
        \tau &= \delta(0)+\eta(1)\\
        \tau &= \delta(1)+\eta(0)
    \end{aligned}
\end{equation} Assuming no interaction effect exists between the treatment and mediator, the pure and total direct (indirect) effects should be the same because the effect does not depend on the mediator (treatment). Under the assumption, $\delta=\delta(0)=\delta(1)$ and $\eta=\eta(0)=\eta(1)$. Then the decomposition is unique: $\tau=\delta+\eta$.  

If people do not assume ``no interaction effect'', we can further decompose the total direct effect or total indirect effect to emphasize the interaction effect. \citet{vanderweele2013three} proposes further decomposing the total direct (or indirect) effect. For example, suppose the mediator is binary, then the total indirect effect can be decomposed into pure indirect effect and the interaction term: $\eta(1)= \eta(0) + [(Y(1,M(1))-Y(1,M(0))-Y(0,M(1))+Y(0,M(0))](M(1)-M(0))$.

Notably, other decomposition methods within the counterfactual approach exist. For example, the study by \citet{gallop2009mediation} examines mediation analysis under principal strata. However, as emphasized by \citet{vanderweele2011principal}, decomposition based on principal strata does not equate to the natural direct effect and natural indirect effect. Moreover, membership in principal strata themselves generally are unidentified. 

\section{Interaction Effect}\label{si:int}

As mentioned early, under the assumption of no interaction effect between the treatment and mediator, identification assumption \ref{ass:cov} is likely to hold in general. If we allow interaction effect, further justification may be required. However, actually, even if we allow interaction effect, if we are flexible to the parameter of interest (total or pure indirect effect), we can ``purify'' the interaction effect by using pure direct effect. Let me explain it by assuming a linear structural model with an interaction:
\begin{align}
        \mathbb{E}[Y] &= \alpha_1 + \delta T + \tilde{\beta} M + \theta TM \label{equ:inter1}\\
        \mathbb{E} [M] &= \alpha_2 + \gamma T  \label{equ:inter2}
\end{align} Here, parameter $\theta$ captures the interaction effect between the treatment and mediator.

Recall that with an interaction effect, there are two ways to decompose total causal effect, and it is not hard to show them under the above linear structural model:
\begin{equation}\label{equ:decom1}
\begin{aligned}
    \tau_1 &= \delta(1)+\eta(0) \\
    &=\mathbb{E}[Y^i(1,M^i(1))-Y^i(0,M^i(1))]+\mathbb{E}[Y^i(0,M(1))- Y^i(0,M(0))]\\
    &=[\delta+\theta(\alpha_2 +\gamma)]+(\tilde{\beta}\gamma)
\end{aligned}
\end{equation}

and
\begin{equation}\label{equ:decom2}
\begin{aligned}
    \tau_2 &= \delta(0)+\eta(1)\\
    &= \mathbb{E}[Y^i(1,M^i(0))-Y^i(0,M^i(0))]+\mathbb{E}[Y^i(1,M(1))- Y^i(1,M(0))]\\
    &=(\delta+\theta \alpha_2) + [(\tilde{\beta}+\theta)\gamma]
\end{aligned}
\end{equation}

It needs to be noted that the total effect has a unique representation $\delta+\theta \alpha_2 + \tilde{\beta}\gamma+\theta \gamma$, i.e., $\tau_1=\tau_2$. Therefore, relationship between $\tau$ and $\gamma$ is unique. However, it has two interpretations $\delta(0)+\eta(1)$ and $\delta(1)+\eta(0)$ by considering different components at one time. 

Corresponding to representation \eqref{equ:lm3}, in the decomposition \eqref{equ:decom1}, the ``parameter'' $\beta$ is equal to $\tilde{\beta}$. We note that the total direct effect $\delta(1)$ contains $\gamma$. However, in the decomposition \eqref{equ:decom2}, the parameter $\beta=\tilde{\beta}+\theta$ and $\delta(0)$ does not contain $\gamma$. Obviously, in the latter decomposition, it is easy to have $Cov(\gamma_k,\delta_k)=0$. Then, we can use OLS to estimate $\tilde{\beta}+\theta$ and the total indirect effect $\eta(1)$ in \eqref{equ:decom2}. However, in general, we cannot consistently estimate $\tilde{\beta}$ and thus pure indirect effect $\eta(0)$ with \eqref{equ:decom1} because assumption \ref{ass:cov} generally does not hold in this decomposition. Therefore, we should use the first decomposition in practice.

This example also highlights the interpretation of $\beta$, which is similar to the relationship between reduced-from and structural model. Although we estimate the same reduced-form model $\tau_k=\mathbb{E}[\delta_k]+\beta\gamma +\varepsilon_k$, under different assumptions, $\beta$ represents different structural parameters. If there exists the interaction effect, $\beta$ has two parts, one is $\tilde{\beta}$ (the effect of the mediator on the outcome), and the second part is $\theta$ (the interaction effect).

\section{Multiple Mechanisms} \label{app:mul}
In this section, we consider the general case allowing multiple independent mechanisms. We let $M_1$ be the mediator of interests, and let $M_{-1}$ be other mediators. We define the overall effect by $\tau^i$:

$$
\begin{aligned}
\tau^i=  Y^i(1,M^i_1(1),...,M^i_j(1))- Y^i(0,M^i_1(0),...,M^i_j(0)).
\end{aligned}
$$

The direct effect is defined as

$$
\begin{aligned}
\delta^i(t') = Y^i(t', M^i_1(t'), ..., M^i_J(t'))- Y^i(t, M^i_1(t'),...,M^i_J(t')).
\end{aligned}
$$

Because we allow multiple mechanisms, apart from the convention, we use $j-$ and $j+$ to denote index $h \in J$ such that $h<j$ and $h>j$, respectively. For the indirect effect, we define as

\begin{equation*}
	\begin{aligned}
\eta_j^i(t', t) = Y^i&(t, M_{j-}(t) ,M_j(t'),M_{j+}(t'))- \\ 
&Y^i(t, M_{j-}(t) ,M_j(t),M_{j+}(t'))
	\end{aligned}
\end{equation*}

The overall causal effect can be decomposed as:

$$
\begin{aligned}
    \tau^i = \delta^i(t')+\sum_{j = 2}^J \eta_j^{i}(t',t)+\eta_1^{i}(t',t)
\end{aligned}
$$

To verify it, we let $t'=1$ and $t=0$;
$$
    \begin{aligned}
        \tau^i &= \delta^i(1)+\sum_{j = 1}^J \eta_j^{i}(1,0)\\
        &= Y^i(1, M^i_1(1), ..., M^i_J(1))- Y^i(0, M^i_1(1),...,M^i_J(1)) \\
        &+Y^i(0, M_1(1),...,M_j(1),)-Y^i(0, M_1(0),... ,M_j(1))\\
        &+Y^i(0, M_1(0),M_2(1)...,M_j(1),)-Y^i(0, M_1(0),M_2(0),... ,M_j(1))\\
        &+...\\
        &=Y^i(1, M^i_1(1), ..., M^i_J(1))-Y^i(0, M^i_1(0), ..., M^i_J(0))
    \end{aligned}
$$ Basically, the first term in each line is canceled out by the second term in the previous line.

Notably, previous definitions are not general enough. For example, in the direct effect, we require all mediators to take potential outcomes under treatment $t'$. In general, different mediators can take different potential outcomes. Similarly, for the indirect effect $\eta_j$, different mediators other than $j$ can take any possible potential outcomes. But whatever potential outcomes they take, our results hold if the mechanism of interests is additively separable from other mechanisms:

\begin{align}
    \tau^i = (\delta^i + \sum_{j = 2}^J \eta_j^{i})+\eta_1^{i}
\end{align} The average level decomposition has the similar form: $\tau = (\delta + \sum_{j = 2}^J \eta_j)+\eta_1$.

\clearpage

\section{Other Identification Strategies}\label{si:ids}

The identification assumptions in main test section derive basic requirements for mediation analysis; however, they do not tell us how to satisfy those assumptions. In other words, we need identification strategies \citep{angrist1999empirical,samii2016causal}. As one of the most widely used econometric tools, instrumental variables (IV) has been proposed to help identify indirect effects. \citet{frolich2017direct} consider two independent IVs, one for the treatment and the other for the mediator. With two IVs, they specify required assumptions and propose estimators for non-parametric identification of the indirect effect. \citet{rudolph2021causal} extend the results to two related IVs. They also consider the case of a single IV for the treatment. Unfortunately, in this case, they conclude that we still need to rely on the assumption of no unobserved confounders of the mediator-outcome relationship.

\begin{figure}
   \begin{center}
       \begin{tikzpicture}
           \node at (0,0) (t) {$T$};
           \node at (2,0) (m) {$M$};
           \node at (4,0) (y) {$Y$};   
           \node[blue] at (1,-1) (u1) {$U_1$};
            \node[red] at (3,-1) (u2) {$U_2$};
            \draw[->,red] (u2) -- (m);
           \draw[->,red] (u2) -- (y);
           \draw[->,blue] (u1) -- (m);
           \draw[->,blue] (u1) -- (t);
        \path[dashed,<->]
            (u1) edge [bend right=45] (u2);
           
          \path[->] (t) edge node[auto] {$\gamma$} (m);
          \path[->] (m) edge node[auto] {$\tilde{\beta}$} (y);
           \path[->]
            (t) edge [bend left=45] node[auto] {$\delta$} (y);   
       \end{tikzpicture}
   \end{center}
   \caption{Mediation analysis generally requires addressing confounders $U_1$ and $U_2.$ }\label{fig:dag2}
\end{figure}

By observing Figure \ref{fig:dag2}, an interesting idea to address $U_2$ is to treat the treatment $T$ as an IV for the mediator $M$. Therefore, we only need one IV for the treatment (or treatment is randomly assigned). To be a valid IV, however, it is well-known that $T$ cannot have a direct effect on $Y$ except from $M$. \citet{sobel2008identification} explores the identification of indirect effects under this exclusion assumption. It is clear that $\delta=0$ is not a typical case. To account for the violation, \citet{strezhnev2021testing} develops a useful sensitivity analysis method. \citet{small2011mediation} proposes a different IV method to bypass the exclusion assumption; however, it requires the interaction between covariate $X$ and a randomly assigned treatment $T$ to be a valid IV for $M$. Recently, \citet{dippel2019mediation} found a new assumption allowing us only to use one IV for the treatment. The assumption adds constraints on the distribution of the unobserved confounding variables: unobserved confounding variables that jointly cause the treatment and the intermediate outcome are independent of the confounders that cause the intermediate and the final outcome, that is, $U_1$ and $U_2$ are independent.

Because most IV methods are developed under a linear structural model, one important feature is that they require a kind of ``constant'' effect assumption or zero covariance assumption ($Cov(\beta^i,\gamma^i)$) we mentioned before. \footnote{See more detailed discussion by \citet{hong2015causality}.} Other identification strategies exist, for example, experimental designs \citep{acharya2018analyzing}, modified Difference-in-differences (DID), and synthetic control method (SCM) (See the survey by \citet{celli2022causal}).

In general, current identification strategies for mediation analysis still depend on several strong assumptions. Moreover, most are designed for IVs, which constrains the scope of application because it is not easy to find a good IV in many studies. 


\clearpage

\section{Illustration of Attenuation Bias}\label{si:atten}

Most estimators in applied research are asymptotically normal. Therefore, without loss of generality, we also assume our estimates $(\hat{\tau}_k,\hat{\gamma}_k)$ are normally distributed around the true values $(\tau_k,\gamma_k)$.\footnote{This assumption is more plausible when the sample size is large for each group. In the era of big data, this should not be a major problem.}

\begin{assumption}[Heterogeneous Measurement]\label{ass:mea}
    \begin{align}
    \hat{\gamma}_k &= \gamma_k + u_k\\
     \hat{\tau}_k &= \tau_k + v_k
\end{align} where $u_k \sim N(0,\sigma_{uk}^2)$ and $v_k \sim N(0,\sigma_{vk}^2)$, $Cov(\gamma_k,u_k)=0$, $Cov(\gamma_k,v_k)=0$, $\sigma_{uk}^2>0$, and $\sigma_{vk}^2>0$.
\end{assumption} 

In the assumption, as the classical setting, we also assume $Cov(\gamma_k,u_k)=0$, and $Cov(\gamma_k,v_k)=0$. However, departing from the CEV, we allow each estimate to have its own variance $\sigma^2_{\cdot k}$. This is more general, and more realistic because it is implausible that treatment effect has the same asymptotic variance across subgroups. Notably, our estimators introduced later are robust to the correlation between $u_k$ and $v_k$. The following proposition illustrates the attenuation under our mediation analysis framework and proposes an adjusted estimator.

\begin{proposition}\label{prop:meta}
Suppose $(\tau_k,\gamma_k)$ satisfies the decomposition \eqref{equ:lm3}: $\tau_k = \mathbb{E}[\delta_k]+\beta_k\gamma_k+\varepsilon_k$, and the observed random sample $(\hat{\tau}_k,\hat{\gamma}_k)$ follows the measurement assumption \ref{ass:mea}.

Let $\sigma^2_\gamma$ be $Var(\gamma_k)$ and  $\lambda = \frac{\sigma^2_\gamma}{\sigma^2_\gamma+\overline{\sigma^2_{uk}}}$. Considering the estimator $\hat{\beta}=\frac{\sum_{k=1}^K(\hat{\gamma}_k-\overline{\hat{\gamma}_k})\hat{\tau}_k}{\sum_{k=1}^K(\hat{\gamma}_k-\overline{\hat{\gamma}_k})^2}$, under assumption $\sum_{k=1}^\infty \frac{Var(\hat{\gamma}_k^2)}{k^2}<\infty$ and assumption \ref{ass:cov},

(1) If $\beta$ is a constant, then $\lambda^{-1} \hat{\beta} \xrightarrow{p} \beta$ as $K \rightarrow \infty$;

(2) If $\beta_k$ is a random variable, then $\lambda^{-1} \hat{\beta} \xrightarrow{p} \mathbb{E} [\beta_k]$ as $K \rightarrow \infty$ under assumption $\beta_k\independent \gamma_k$.


\end{proposition}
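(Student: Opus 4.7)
The plan is to mimic the classical errors-in-variables derivation but with the twist that $u_k$ and $v_k$ carry $k$-specific variances, so the attenuation factor must be the average noise variance $\overline{\sigma^2_{uk}}$ rather than a single $\sigma^2_u$. First I would substitute the measurement model of Assumption~\ref{ass:mea} and the structural equation \eqref{equ:lm3} directly into the sample-analog formula for $\hat{\beta}$, writing $\hat{\gamma}_k - \overline{\hat{\gamma}_k} = (\gamma_k - \overline{\gamma_k}) + (u_k - \overline{u_k})$ and $\hat{\tau}_k = \mathbb{E}[\delta_k] + \beta \gamma_k + \varepsilon_k + v_k$ (with $\beta$ replaced by $\beta_k$ in part~(2)). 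This reduces $\hat{\beta}$ to a ratio of two averages of products of the underlying variables and the measurement errors.

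The main work is to evaluate the probability limits of the numerator and denominator term by term via a suitable LLN. For the denominator, expanding $\hat{\gamma}_k^2 = \gamma_k^2 + 2 u_k \gamma_k + u_k^2$ yields $\frac{1}{K}\sum_k \gamma_k^2 \to \mathbb{E}[\gamma_k^2]$, $\frac{1}{K}\sum_k u_k\gamma_k \to 0$ (by $Cov(\gamma_k, u_k)=0$ and $\mathbb{E}[u_k]=0$), and $\frac{1}{K}\sum_k u_k^2 \to \lim \frac{1}{K}\sum_k \sigma^2_{uk} = \overline{\sigma^2_{uk}}$; together with $\overline{\hat{\gamma}_k} \to \mathbb{E}[\gamma_k]$ this gives denominator $\to \sigma^2_\gamma + \overline{\sigma^2_{uk}}$. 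For the numerator, Assumption~\ref{ass:cov} kills the $\gamma_k\varepsilon_k$ cross term, $Cov(\gamma_k,v_k)=0$ and the zero-mean property kill the remaining cross terms involving measurement errors (treating $u_k$ and $v_k$ as uncorrelated with each other and with $\varepsilon_k$), and what survives is $\beta \sigma^2_\gamma$ in part~(1). Hence $\hat{\beta} \xrightarrow{p} \beta \lambda$, and multiplying by $\lambda^{-1}$ delivers the claim. For part~(2), the only change is that the $\beta\gamma_k^2$ term is replaced by $\beta_k\gamma_k^2$; invoking $\beta_k \independent \gamma_k$ yields $\frac{1}{K}\sum_k \beta_k\gamma_k^2 \to \mathbb{E}[\beta_k]\mathbb{E}[\gamma_k^2]$ and the rest of the argument carries through with $\beta$ replaced by $\mathbb{E}[\beta_k]$.

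The main obstacle is justifying the LLN step rigorously, because the summands $\gamma_k^2$, $u_k^2$, $\hat{\gamma}_k\hat{\tau}_k$, etc.\ are independent but not identically distributed, so the ordinary Khinchin LLN does not directly apply. This is precisely the role of the summability condition $\sum_{k=1}^\infty Var(\hat{\gamma}_k^2)/k^2 < \infty$: it is Kolmogorov's criterion for the strong law of large numbers for independent, non-identically distributed sequences applied to $\hat{\gamma}_k^2$, which is the second-moment quantity that anchors both the numerator and the denominator. Once that convergence is in hand, the remaining cross-product averages (involving $u_k\gamma_k$, $\gamma_k\varepsilon_k$, and so on) follow from the analogous second-moment bounds implicit in the random-effects framework of Section~\ref{sec:des} together with the Cauchy--Schwarz inequality. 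A minor technical point I would flag explicitly is that the limiting object $\overline{\sigma^2_{uk}}$ must be interpreted as $\lim_{K \to \infty} \frac{1}{K}\sum_{k=1}^K \sigma^2_{uk}$, whose existence needs to be assumed (or, in practice, estimated by the sample average of $\hat{\sigma}^2_{uk}$ reported by each study).
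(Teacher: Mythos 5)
Your proposal follows essentially the same route as the paper's proof: expand $\hat{\gamma}_k^2=\gamma_k^2+2u_k\gamma_k+u_k^2$, take probability limits term by term via Kolmogorov's SLLN for independent non-identically distributed summands (which is exactly where the condition $\sum_k Var(\hat{\gamma}_k^2)/k^2<\infty$ enters), kill the cross terms with $Cov(\gamma_k,\delta_k)=0$, $Cov(\gamma_k,u_k)=Cov(\gamma_k,v_k)=0$ and $\beta_k\independent\gamma_k$, and conclude $\hat{\beta}\xrightarrow{p}\lambda\,\mathbb{E}[\beta_k]$ so that $\lambda^{-1}\hat{\beta}$ is consistent. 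Your added remark that $\overline{\sigma^2_{uk}}$ must be read as the limiting average of the error variances (and that $u_k,v_k$ are implicitly taken uncorrelated here) is a fair clarification of what the paper leaves tacit, but it does not change the argument.
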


In the above proposition \ref{prop:meta}, we also assume $\sum_{k=1}^\infty \frac{Var(\hat{\gamma}^2_k)}{k^2}<\infty$. This technical assumption is required because, in the proof, we apply Kolmogorov’s strong law of large numbers with independent but not identically distributed samples. The proposition suggests using $\lambda^{-1}\hat{\beta}$ as a consistent estimator. For $\lambda=\frac{\sigma^2_\gamma}{\sigma^2_\gamma+\overline{\sigma^2_{uk}}}$, the numerator is the variance of true $\gamma_k$; in the denominator, $\overline{\sigma^2_{uk}}$ is the mean of the variance of $\hat{\gamma}_k$. Because the denominator is always larger than the numerator, $\lambda<1$. In practice, we have data $\sigma^2_{uk}$ and therefore can calculate the sample average $\overline{\sigma^2_{uk}}$. However, we need an estimate of the unknown $\sigma^2_\gamma$, the variance of the true $\gamma$. The variance can be regarded as the ``inter-study variance'' in the random-effects model \footnote{In the random-effects model, observed treatment effect $y_i$ is assumed to be a function of the true treatment effect for the study $\theta_i$ and the sampling error $e_i$: $y_i=\theta_i+e_i$; and $\theta_i$ can be decomposed as $\mu+\delta_i$ where $\mu$ is the overall treatment effect and $\delta_i$ is the deviation of the $i$'s-study's effect from the overall effect. The variance of $\delta_i$ is the inter-study variance. If, in the special case, it equals 0, we have the fixed-effect model.}. Many estimators in the meta-analysis literature exist \citep{dersimonian1986meta,paule1982consensus,viechtbauer2005bias,dersimonian2007random}.

\section{Proofs}

\subsection{Proof of Proposition \ref{prop:consist}}\label{sec:app_p1}

\begin{proof}
We first decompose the average total causal effect $\tau$ as follows:
$$
\begin{aligned}
    \tau(t,t') &= \mathbb{E}[Y^i(t,M^i(t))- Y^i(t',M^i(t))]+ \mathbb{E} [Y^i(t',M^i(t))-Y^i(t',M^i(t'))]\\
    &=\mathbb{E}[Y^i(t,M^i(t))- Y^i(t',M^i(t))] + \frac{\mathbb{E} [Y^i(t',M^i(t))-Y^i(t',M^i(t'))]}{\mathbb{E}[M^i(t)-M^i(t')]} \times\mathbb{E}[M^i(t)-M^i(t')] \\
    &:= \delta + \beta \gamma 
    \end{aligned}
$$
Then, given the random sample $(\tau_k,\delta_k,\beta_k,\gamma_k)$, we convert it to be simple linear regression
\begin{align}
    \tau_k = \mathbb{E}\delta_k+\beta_k\gamma_k + \varepsilon_k
\end{align} where $\varepsilon_k = \delta_k - \mathbb{E}\delta_k$.


Consider the estimator 
$\hat{\beta}=\frac{\sum_{k=1}^K(\gamma_k-\overline{\gamma}_k)\tau_k}{\sum_{k=1}^K(\gamma_k-\overline{\gamma}_k)^2}$.


We first to show result (2) that $\hat{\beta} \rightarrow \mathbb{E}\beta_k$. Note that 

(1) By construction, $\mathbb{E}\varepsilon_k=\mathbb{E}\delta_k-\mathbb{E}\delta_k=0$;

(2) Assumption 2 implies that $\mathbb{E}[\gamma_k\epsilon_k]=\mathbb{E}[\gamma_k(\delta_k-\mathbb{E}\delta_k)]=Cov(\gamma_k,\delta_k)=0$. In the proof, I also use $\overline{\delta}_k$ to denote $\mathbb{E}\delta_k$ in order to avoid confusion with other expectation notation.
\begin{align}
\hat{\beta}&=\frac{\sum_{k=1}^K(\gamma_k-\overline{\gamma}_k)\tau_k}{\sum_{k=1}^K(\gamma_k-\overline{\gamma}_k)^2}\\
&=\frac{\sum_{k=1}^K(\gamma_k-\overline{\gamma}_k)(\overline{\delta}_k+\beta_k\gamma_k + \varepsilon_k)}{\sum_{k=1}^K(\gamma_k-\overline{\gamma}_k)^2}\\
&=\frac{ \frac{1}{K} \sum_{k=1}^K(\gamma_k-\overline{\gamma}_k)\beta_k\gamma_k}{\frac{1}{K} \sum_{k=1}^K(\gamma_k-\overline{\gamma}_k)^2}+\frac{\frac{1}{K} \sum_{k=1}^K(\gamma_k-\overline{\gamma}_k)\varepsilon_k}{\frac{1}{K} \sum_{k=1}^K(\gamma_k-\overline{\gamma}_k)^2}\label{equ:pf1}\\
& \xrightarrow{p} \frac{\mathbb{E}[\gamma^2_k\beta_k] -\mathbb{E}[\gamma_k] \mathbb{E}[\gamma_k\beta_k]}{Var(\gamma_k)} + \frac{\mathbb{E}[(\gamma_k-\overline{\gamma}_k)\varepsilon_k]}{Var(\gamma_k)}\label{equ:pf2}\\
&= \frac{Var(\gamma_k)\mathbb{E}\beta_k}{Var(\gamma_k)} + \frac{\mathbb{E}\gamma_k \varepsilon_k}{Var(\gamma_k)} \label{equ:pf3}\\
&= \mathbb{E}\beta_k
\end{align} where line $\eqref{equ:pf1}$ comes from $\overline{\delta}_k \sum_{k=1}^K(\gamma_k-\overline{\gamma}_k)=0$, line $\eqref{equ:pf2}$ is implied by Slutsky's Lemma, $\eqref{equ:pf3}$ is implied by 
$\mathbb{E}\varepsilon_k$=0 and assumption $\beta_k \independent \gamma_k$, the last line is implied by $\mathbb{E}[\gamma_k\epsilon_k]=0$, 


Result (1) trivially follows the same logic.

\end{proof}

\subsection{Proposition \ref{prop:consist} under Multiple Mechanisms}\label{app:ext}

In the main text, when we discuss our novel decomposition and identification assumptions, we consider ``no interaction effect'' so that the decomposition is unique. Here, we can slightly relax it to be ``no interaction effect with respect to $M_1$''. 

\begin{assumption}[No interaction effect with respect to $M_1$] \label{ass:app_1}
For any $t_j \in \{0,1\}$ where $j=1,2,..,J$,
    $\eta_1(t_1,M_1(1),M_2(t_2),...,M_2(t_j))-\eta_1(t_1,M_1(0),M_2(t_2),...,M_2(t_j))=B$ 
\end{assumption}
In other words, the assumption allows any possible interaction effect among the treatment $T$ and other mediators $M_{-1}$; however, the effect of $M_1$ does not depend on them. Under this assumption, without loss of the generality, we use $\Delta$ to denote $\delta + \sum_{j = 2}^J \eta_j$ and thus 
\begin{align}
    \tau = \Delta + \eta_1 \label{equ:app_decom}
\end{align}

Similarly, to have a unique form of $\gamma$ (the treatment effect on the mediator $M_1$), we also need a kind of ``no interaction effect.''

\begin{assumption}[No interaction effect] \label{ass:app_2}

For any $t_j \in \{0,1\}$ where $j=2,..,J$,
    $$\mathbb{E}Y^i(M_1(1),M_2(t_2),...,M_2(t_j))-\mathbb{E}Y^i(M_1(0),M_2(t_2),...,M_2(t_j))=D$$
\end{assumption}

Under the above two ``no interaction effect'' assumptions, subsequently, we can modify the Proposition \ref{prop:consist} as follows:

\begin{proposition}
Let $(\tau,\Delta,\gamma)$ are random variables. Given the random sample $(\tau_k,\gamma_k)_{k \in K}$. Suppose $Var(\gamma_k)>0$ and $Cov(\gamma_k,\Delta_k)=0$.

Considering the estimator $\hat{\beta}=\frac{\sum_{k=1}^K(\gamma_k-\overline{\gamma}_k)\tau_k}{\sum_{k=1}^K(\gamma_k-\overline{\gamma}_k)^2}$.

(1) If $\beta$ is a constant, then $\hat{\beta} \xrightarrow{p} \beta$ as $K \rightarrow \infty$;

(2) If $\beta_k$ is a random variable, then $\hat{\beta} \xrightarrow{p} \mathbb{E} \beta_k$ as $K \rightarrow \infty$ under assumption $\beta_k\perp \gamma_k$

and thus $\eta_k$ is consistently estimated by $\hat{\beta}\gamma_k$.
\end{proposition}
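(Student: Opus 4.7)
The plan is to mimic the proof of Proposition \ref{prop:consist} in Section \ref{sec:app_p1}, but with $\Delta_k$ playing the role that $\delta_k$ played there. The key observation is that under Assumptions \ref{ass:app_1} and \ref{ass:app_2}, the decomposition $\tau = \Delta + \eta_1$ in \eqref{equ:app_decom} is well-defined, and we can write $\eta_{1,k} = \beta_k \gamma_k$ with $\gamma_k = \mathbb{E}[M_1^i(1) - M_1^i(0)]$ uniquely determined (not depending on the levels of other mediators, thanks to Assumption \ref{ass:app_2}). This places us in exactly the same algebraic setting as the single-mediator case.

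First I would add and subtract $\mathbb{E}[\Delta_k]$ inside the decomposition to obtain the regression-form representation
\begin{equation*}
\tau_k = \mathbb{E}[\Delta_k] + \beta_k \gamma_k + \varepsilon_k, \qquad \varepsilon_k := \Delta_k - \mathbb{E}[\Delta_k].
\end{equation*}
By construction $\mathbb{E}[\varepsilon_k] = 0$, and the hypothesis $Cov(\gamma_k, \Delta_k) = 0$ immediately yields $\mathbb{E}[\gamma_k \varepsilon_k] = Cov(\gamma_k, \Delta_k) = 0$. These are the only two moment facts that drove the original proof; neither relied on $\Delta_k$ being a single direct effect versus a sum of a direct effect and several other indirect effects.

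Next I would substitute this expression into the numerator of $\hat{\beta}$ and split it into a ``signal'' piece and a ``noise'' piece, exactly as in lines \eqref{equ:pf1}--\eqref{equ:pf3} of the original proof. The constant term $\mathbb{E}[\Delta_k]$ drops out because $\sum_k (\gamma_k - \overline{\gamma}_k) = 0$. Applying the weak law of large numbers to the sample averages of $\gamma_k^2 \beta_k$, $\gamma_k \beta_k$, $\gamma_k$, $\gamma_k^2$, and $\gamma_k \varepsilon_k$, together with Slutsky's lemma (valid since $Var(\gamma_k) > 0$), gives
\begin{equation*}
\hat{\beta} \xrightarrow{p} \frac{\mathbb{E}[\gamma_k^2 \beta_k] - \mathbb{E}[\gamma_k]\mathbb{E}[\gamma_k \beta_k]}{Var(\gamma_k)} + \frac{\mathbb{E}[\gamma_k \varepsilon_k]}{Var(\gamma_k)}.
\end{equation*}
The second term vanishes by the isolated-mechanism hypothesis. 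For part (1), with $\beta_k \equiv \beta$, the first term collapses to $\beta$. For part (2), the independence assumption $\beta_k \perp \gamma_k$ factors the cross moments so that the numerator becomes $\mathbb{E}[\beta_k] \cdot Var(\gamma_k)$, yielding $\mathbb{E}[\beta_k]$. Consistency of $\hat{\beta}\gamma_k$ for $\eta_{1,k}$ then follows by the continuous mapping theorem.

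I do not anticipate a genuine obstacle here: once Assumptions \ref{ass:app_1} and \ref{ass:app_2} pin down a unique $(\beta, \gamma)$ pair and the hypothesis $Cov(\gamma_k, \Delta_k) = 0$ is in force, every step of the original argument transfers verbatim. The only thing worth being careful about is confirming that $\gamma_k$ is well-defined as a scalar (rather than depending on the reference values of $M_2, \ldots, M_J$), which is exactly what Assumption \ref{ass:app_2} guarantees; and that $\eta_{1,k}$ can be written as $\beta_k \gamma_k$ with the same $\gamma_k$, which is what Assumption \ref{ass:app_1} delivers. With those two checks in place, the proof is just a relabeling of the single-mediator argument.
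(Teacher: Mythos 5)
Your proposal is correct and matches the paper's treatment: the paper proves this by exactly the argument of Proposition \ref{prop:consist} with $\Delta_k$ in place of $\delta_k$, using $\mathbb{E}[\varepsilon_k]=0$ by construction, $\mathbb{E}[\gamma_k\varepsilon_k]=Cov(\gamma_k,\Delta_k)=0$, the vanishing of the centered constant term, and the law of large numbers with Slutsky's lemma, with independence $\beta_k \perp \gamma_k$ factoring the cross moments in part (2). Your additional remarks that Assumptions \ref{ass:app_1} and \ref{ass:app_2} are what make $\gamma_k$ and the representation $\eta_{1,k}=\beta_k\gamma_k$ well-defined are consistent with the paper's (implicit) use of those assumptions.
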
 

The key difference between the above-modified proposition and the original one is the identification assumption. Here, we need $Cov(\gamma_k,\Delta_k)=0$. It means that the treatment effect on the mediator of interest is not correlated to the direct effect and other mechanisms.

\subsection{Proof of Unbiasedness }\label{app:unbias}

\begin{proposition}\label{prop:unbias}
Let $(\tau,\delta,\gamma)$ be random variables and as defined in \eqref{equ:p2} and \eqref{equ:p3}. Given the random sample $(\tau_k,\gamma_k)_{k \in K}$. Suppose following two assumptions hold:

(1) (Variance) $Var(\gamma_k)>0$;

(2) (Mean Independence) $\mathbb{E}[\delta_k|\gamma_k]=\mathbb{E}[\delta_k]$

Considering the estimator $\hat{\beta}=\frac{\sum_{k=1}^K(\gamma_k-\overline{\gamma}_k)\tau_k}{\sum_{k=1}^K(\gamma_k-\overline{\gamma}_k)^2}$.

(1) If $\beta$ is a constant, then $\mathbb{E}\hat{\beta}$ = $\beta$;

(2) If $\beta_k$ is a random variable, then $\mathbb{E}\hat{\beta} = \mathbb{E}\beta_k$ under assumption $\mathbb{E}[\beta_k|\gamma_k]=\mathbb{E}\beta_k$,

and thus $\eta_k$ is unbiased.

\end{proposition}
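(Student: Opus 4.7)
The plan is to reuse the decomposition from Proposition~\ref{prop:consist}'s proof but stop before passing to the limit, instead exploiting linearity of expectation directly. Starting from the structural form $\tau_k = \mathbb{E}[\delta_k] + \beta_k \gamma_k + \varepsilon_k$ with $\varepsilon_k := \delta_k - \mathbb{E}[\delta_k]$, I would substitute into the numerator of $\hat{\beta}$. The constant intercept $\mathbb{E}[\delta_k]$ cancels immediately because $\sum_k (\gamma_k - \overline{\gamma}_k) = 0$, leaving
\[
\hat{\beta} \;=\; \frac{\sum_{k=1}^K (\gamma_k - \overline{\gamma}_k)\beta_k \gamma_k}{\sum_{k=1}^K (\gamma_k - \overline{\gamma}_k)^2} \;+\; \frac{\sum_{k=1}^K (\gamma_k - \overline{\gamma}_k)\varepsilon_k}{\sum_{k=1}^K (\gamma_k - \overline{\gamma}_k)^2}.
\]

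Next I would condition on the full vector $\gamma := (\gamma_1, \ldots, \gamma_K)$ and apply the law of iterated expectations. For the second summand, the denominator is $\gamma$-measurable, so it suffices to show $\mathbb{E}[\varepsilon_k \mid \gamma] = 0$. Since the sample is i.i.d.\ across $k$, $\mathbb{E}[\delta_k \mid \gamma] = \mathbb{E}[\delta_k \mid \gamma_k]$, and the mean independence assumption $\mathbb{E}[\delta_k \mid \gamma_k] = \mathbb{E}[\delta_k]$ then gives $\mathbb{E}[\varepsilon_k \mid \gamma] = 0$, killing the noise term. For the first summand, in the constant case $\beta_k \equiv \beta$ pulls out to yield $\beta \cdot \frac{\sum_k (\gamma_k - \overline{\gamma}_k)\gamma_k}{\sum_k (\gamma_k - \overline{\gamma}_k)^2} = \beta$, using the elementary identity $\sum_k (\gamma_k - \overline{\gamma}_k)\gamma_k = \sum_k (\gamma_k - \overline{\gamma}_k)^2$. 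In the random case, the same identity combined with $\mathbb{E}[\beta_k \mid \gamma] = \mathbb{E}[\beta_k \mid \gamma_k] = \mathbb{E}[\beta_k]$ yields $\mathbb{E}[\beta_k] \cdot \frac{\sum_k (\gamma_k - \overline{\gamma}_k)^2}{\sum_k (\gamma_k - \overline{\gamma}_k)^2} = \mathbb{E}[\beta_k]$. Taking outer expectations finishes both parts, and unbiasedness of $\hat{\beta}\gamma_k$ for $\eta_k = \beta_k \gamma_k$ then follows from conditioning on $\gamma_k$ and the independence of $\hat{\beta}$'s noise component from any fixed $\gamma_k$.

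The main obstacle — and it is a mild one rather than a deep one — is the bookkeeping around conditional expectations: the assumption in the statement is phrased pointwise (mean independence of $\delta_k$ or $\beta_k$ from $\gamma_k$), but the ratio form of $\hat{\beta}$ forces me to work with $\mathbb{E}[\cdot \mid \gamma_1,\ldots,\gamma_K]$. I would briefly justify the passage from pointwise to joint conditioning by invoking the random-sample assumption, which makes $(\delta_k,\beta_k,\gamma_k)$ independent across $k$, so that the other $\gamma_j$'s drop out of the conditioning set. A secondary point worth flagging is that Assumption~\ref{ass:cov} ($\mathrm{Cov}(\gamma_k,\delta_k)=0$) is \emph{not} invoked here: mean independence is strictly stronger for this purpose, and it buys us finite-sample unbiasedness rather than only asymptotic consistency, which is precisely the gain advertised in the remark preceding this proposition.
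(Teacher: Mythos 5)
Your argument is correct and follows essentially the same route as the paper's proof: decompose $\hat{\beta}$ via $\tau_k=\mathbb{E}[\delta_k]+\beta_k\gamma_k+\varepsilon_k$, condition on the observed $\gamma_1,\ldots,\gamma_K$, use mean independence to annihilate the $\varepsilon_k$ term, and use $\sum_k(\gamma_k-\overline{\gamma}_k)\gamma_k=\sum_k(\gamma_k-\overline{\gamma}_k)^2$ together with $\mathbb{E}[\beta_k\mid\gamma_k]=\mathbb{E}[\beta_k]$ for the leading term. Your explicit justification of the passage from conditioning on $\gamma_k$ to conditioning on the whole vector via the random-sample assumption is a detail the paper leaves implicit, but it is the same proof.
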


\begin{proof}

For unbiasedness, note that by construction, $\mathbb{E}\varepsilon_k=\mathbb{E}\delta_k-\mathbb{E}\delta_k=0$ and thus with mean independence assumption (2) we have $\mathbb{E}[\epsilon_k|\gamma_k]=\mathbb{E}\varepsilon_k=0$. From line \eqref{equ:pf1}, we take the expectation given observed $\gamma_1,\gamma_2,...,\gamma_K$,
\begin{align}
    \mathbb{E}[\hat{\beta}|\gamma_1,\gamma_2,...,\gamma_K] 
    &= \mathbb{E}[\beta_k] \frac{  \sum_{k=1}^K(\gamma_k-\overline{\gamma}_k)\gamma_k}{\sum_{k=1}^K(\gamma_k-\overline{\gamma}_k)^2} + \frac{\sum_{k=1}^K(\gamma_k-\overline{\gamma}_k)\mathbb{E}[\epsilon_k|\gamma_k]}{\sum_{k=1}^K(\gamma_k-\overline{\gamma}_k)^2}\\
    &=\mathbb{E}\beta_k
\end{align}

Result (1) trivially follows the same logic.

\end{proof}

\subsection{BCES estimator}\label{sec:bces}

Ideally, if we have data on the true value $(\gamma_k,\tau_k)$, the OLS estimator is consistent, from Proposition \ref{prop:consist} and the proof \ref{sec:app_p1}:
\begin{align}  \hat{\beta}_{ideal}&=\frac{\sum_{k=1}^K(\gamma_k-\overline{\gamma}_k)\tau_k}{\sum_{k=1}^K(\gamma_k-\overline{\gamma}_k)^2}\\
& \rightarrow \frac{\sigma^2_{\gamma} \mathbb{E}\beta_k}{\sigma^2_{\gamma}}
\end{align}

However, we only observe $(\hat{\gamma}_k,\hat{\tau}_k)$; therefore, the empirical estimator is attenuated, by proof \ref{pf:meta}:
\begin{align}
\hat{\beta}&=\frac{\sum_{k=1}^K(\hat{\gamma}_k-\overline{\hat{\gamma}_k})\hat{\tau}_k}{\sum_{k=1}^K(\hat{\gamma}_k-\overline{\hat{\gamma}_k})^2}\\
& = \frac{\frac{1}{K} \sum_{k=1}^K(\hat{\gamma}_k-\overline{\hat{\gamma}_k})[\mathbb{E}\delta + \gamma_k \beta_k + (\varepsilon_k+v_k)]}{\frac{1}{K} \sum_{k=1}^K(\hat{\gamma}_k-\overline{\hat{\gamma}_k})^2}\\
& \rightarrow \frac{\sigma^2_{\gamma} \mathbb{E}\beta_k}{\sigma^2_{\gamma}+\overline{\sigma^2_{uk}}}
\end{align}

Therefore, to obtain a consistent estimator, in the denominator, we could subtract $\overline{\sigma^2_{uk}}$. The modified estimator is exactly the BCES estimator:
\begin{align}  \hat{\beta}_{BCES}=\frac{\sum_{k=1}^K(\hat{\gamma}_k-\overline{\hat{\gamma}_k})\hat{\tau}_k}{\sum_{k=1}^K(\hat{\gamma}_k-\overline{\hat{\gamma}_k})^2-\sum_{k=1}^K\sigma^2_{uk}}
\end{align}

If we allow correlation between $u_k$ and $v_k$, we should adjust the numerator as well. Let $\sigma^2_{uvk}$ denote the covariance for observation $k$. The resulting BCES estimator is the same as the one proposed in the \cite{akritas1996linear}:
\begin{align}  \hat{\beta}_{BCES}=\frac{\sum_{k=1}^K(\hat{\gamma}_k-\overline{\hat{\gamma}_k})\hat{\tau}_k-\sum_{k=1}^K \sigma^2_{uvk}}{\sum_{k=1}^K(\hat{\gamma}_k-\overline{\hat{\gamma}_k})^2-\sum_{k=1}^K\sigma^2_{uk}}
\end{align}

\subsection{Proof of Proposition \ref{prop:twostep_ols}}

\begin{proof}
Let $Z_k = (1,\gamma_k)'$, $\widehat Z_k = (1,\widehat\gamma_k)'$, and $\theta_0 = (\alpha_0,\beta)'$.
The population second-stage model can be written as $\tau_k = Z_k'\theta_0 + \varepsilon_k.$
The isolated-mechanism condition implies $
    \mathbb E[Z_k\varepsilon_k]=0$,
because
\[
    \mathbb E[\varepsilon_k]=0,
    \qquad
    \mathbb E[\gamma_k\varepsilon_k]
    =
    \mathbb E[(\gamma_k-\mu_\gamma)\varepsilon_k]
    +
    \mu_\gamma\mathbb E[\varepsilon_k]
    =
    0.
\]

First consider the infeasible OLS estimator that uses the true group-level effects:
\[
    \widetilde\beta_K
    =
    \frac{
        \sum_{k=1}^K
        (\gamma_k-\bar\gamma)(\tau_k-\bar\tau)
    }{
        \sum_{k=1}^K
        (\gamma_k-\bar\gamma)^2
    },
\]
where $\bar\gamma = \frac{1}{K}\sum_{k=1}^K \gamma_k$,$\bar\tau = \frac{1}{K}\sum_{k=1}^K \tau_k$. Since $\tau_k = \mathbb{E}[\delta_k] + \beta\gamma_k + \varepsilon_k$, we have
\[
    \widetilde\beta_K-\beta
    =
    \frac{
        K^{-1}\sum_{k=1}^K(\gamma_k-\bar\gamma)\varepsilon_k
    }{
        K^{-1}\sum_{k=1}^K(\gamma_k-\bar\gamma)^2
    }.
\]
Standard OLS theorem tells us
\[
    \sqrt K(\widetilde\beta_K-\beta)
    \xrightarrow{d}
    \mathcal N\left(0,\frac{\Omega}{Q^2}\right).
\]

It remains to show that replacing \((\gamma_k,\tau_k)\) with
\((\widehat\gamma_k,\widehat\tau_k)\) does not change the first-order limit. Define
\[
    A_K = \frac{1}{K}\sum_{k=1}^K Z_kZ_k',
    \qquad
    b_K = \frac{1}{K}\sum_{k=1}^K Z_k\tau_k,
\]
and
\[
    \widehat A_K
    =
    \frac{1}{K}\sum_{k=1}^K \widehat Z_k\widehat Z_k',
    \qquad
    \widehat b_K
    =
    \frac{1}{K}\sum_{k=1}^K \widehat Z_k\widehat\tau_k.
\]
Let
\[
    \widetilde\theta_K=A_K^{-1}b_K,
    \qquad
    \widehat\theta_K=\widehat A_K^{-1}\widehat b_K.
\]
The slope components of \(\widetilde\theta_K\) and \(\widehat\theta_K\) are
\(\widetilde\beta_K\) and \(\widehat\beta_K\), respectively.

Since $\widehat\gamma_k=\gamma_k+u_{k}$,$\widehat\tau_k=\tau_k+v_{k}$, we have
\[
    \widehat A_K-A_K
    =
    \frac{1}{K}
    \sum_{k=1}^K
    \begin{pmatrix}
        0 & u_{k}\\
        u_{k} & 2\gamma_k u_{k}+u_{k}^2
    \end{pmatrix},
\]
and
\[
    \widehat b_K-b_K
    =
    \frac{1}{K}
    \sum_{k=1}^K
    \begin{pmatrix}
        v_{k}\\
        \gamma_k v_{k}+\tau_k u_{k}+u_{k}v_{k}
    \end{pmatrix}.
\]


so
\[
    \sum_{k=1}^K(u_{k}^2+v_{k}^2)=o_p(1).
\]
By Cauchy's inequality,
\[
    \frac{1}{\sqrt K}\sum_{k=1}^K |u_{k}|
    \leq
    \left(\sum_{k=1}^K u_{k}^2\right)^{1/2}
    =
    o_p(1),
\]
and
\[
    \frac{1}{\sqrt K}\sum_{k=1}^K |\gamma_k u_{k}|
    \leq
    \left(\frac{1}{K}\sum_{k=1}^K\gamma_k^2\right)^{1/2}
    \left(\sum_{k=1}^K u_{k}^2\right)^{1/2}
    =
    o_p(1).
\]
The same bounds apply to all components of
\(\widehat A_K-A_K\) and \(\widehat b_K-b_K\). Hence
$\sqrt K(\widehat A_K-A_K)=o_p(1)$, and $\sqrt K(\widehat b_K-b_K)=o_p(1)$. Using the identity
\[
    \widehat\theta_K-\widetilde\theta_K
    =
    \widehat A_K^{-1}
    \left[
        (\widehat b_K-b_K)
        -
        (\widehat A_K-A_K)\widetilde\theta_K
    \right],
\]
and noting that \(\widehat A_K^{-1}=O_p(1)\) and
\(\widetilde\theta_K=O_p(1)\), we obtain
$ \sqrt K(\widehat\theta_K-\widetilde\theta_K)=o_p(1)$.
Therefore the slope components satisfy
\[
    \sqrt K(\widehat\beta_K-\widetilde\beta_K)=o_p(1).
\]
Combining this result with the asymptotic normality of
\(\widetilde\beta_K\) gives
\[
    \sqrt K(\widehat\beta_K-\beta)
    =
    \frac{1}{Q}
    \frac{1}{\sqrt K}
    \sum_{k=1}^K
    (\gamma_k-\mu_\gamma)\varepsilon_k
    + o_p(1),
\]
and hence
\[
    \sqrt K(\widehat\beta_K-\beta)
    \xrightarrow{d}
    \mathcal N\left(0,\frac{\Omega}{Q^2}\right).
\]
This proves the theorem.
\end{proof}

\subsection{Proof of Proposition \ref{prop:inference}}

\begin{proof}
We start from simplifying some notations. The overall average indirect effect is $\mathbb{E}[\eta_k]=\sum_{k=1}^K\mathbb{E}[\beta_k\gamma_k] \mathbb{P}[\gamma_k]$. We use $\lambda_k$ to  simplify the notation $\mathbb{P}[\gamma_k]$, the proportion of $\gamma_k$ in the population, and then we use $\gamma_0$ to denote $\sum_{k=1}^K\gamma_k \lambda_k$. Under simplified notation, the overall average indirect effect is $\mathbb{E}[\beta_k]\gamma_0$. It is estimated by $\hat{\beta} \hat{\gamma}_0:=\hat{\beta}\sum_{k=1}^K \hat{\gamma}_k \hat{\lambda}_k$, where we let $\hat{\lambda}_k$ to denote the consistent estimate of the proportion $\lambda_k$, and $\hat{\gamma}_0$ denotes $\sum_{k=1}^K\hat{\gamma}_k \hat{\lambda}_k$.

 The null hypothesis we consider is $H_0: \mathbb{E}[\beta_k]\gamma_0=0$, which can be expressed as the union, $H_0: \mathbb{E}[\beta_k]=0 \cup \gamma_0=0$. Therefore, $H_0$ is rejected if we reject both parts of the null hypothesis. It is known that the (asymptotic) level $\alpha$ test is given by the (asymptotic) level $\alpha$ test of both parts \citep[for example see ][theorem 8.3.23]{berger2001statistical}.

    To test whether $\mathbb{E}[\beta_k]=0$, we apply the asymptotic normality of $\hat{\beta}$. Therefore, as traditional test, the $p$-value is $2*\Phi_Z(-|\frac{\hat{\beta}}{\hat{\sigma}_{\beta}}|)$, which is $p_{\beta}$.
    
    To test $\gamma_0=0$, given normality assumption, we know $\hat{\gamma}_{0} \rightarrow N(\gamma_0, \sum_{k=1}^K \lambda^2 \sigma^2_{uk})$. Therefore, as traditional test, the $p$-value is $2*\Phi_Z(-|\frac{\hat{\gamma}_{0}}{\sqrt{\sum_{k=1}^K \hat{\lambda}^2 \hat{\sigma}^2_{uk}}}|)$, which is $p_{\gamma}$. It follows that rejection of the Null hypothesis if $p_{\beta}\leq \alpha$ and $p_{\gamma}\leq \alpha$ is an asymptotic level $\alpha$ test.
\end{proof}

\subsection{Proof of Proposition \ref{prop:inference2}}

\begin{proof}
    
     Let $\Phi_Z(\cdot):=\Phi_Z (\frac{1+\sqrt{1-\alpha}}{2})$. From the proof of \ref{prop:inference}, we can construct the separate asymptotic $\sqrt{(1-\alpha)}\%$ confidence interval as 
    $$
    \begin{aligned}
        \mathbb{P}[\hat{\gamma}_0 - \Phi^{-1}_Z (\cdot)* \sqrt{\sum_{k=1}^K \hat{\lambda}^2 \hat{\sigma}^2_{uk}} \leq \gamma_0 \leq \hat{\gamma}_0 + \Phi^{-1}_Z (\cdot)* \sqrt{\sum_{k=1}^K \hat{\lambda}^2 \hat{\sigma}^2_{uk}} ] &= \sqrt{(1-\alpha)}\\
         \mathbb{P}[\hat{\beta} - \Phi^{-1}_Z (\cdot)* \hat{\sigma}_{\beta} \leq \mathbb{E}[\beta_k] \leq \hat{\beta} + \Phi^{-1}_Z (\cdot)* \hat{\sigma}_{\beta}] &= \sqrt{(1-\alpha)}
    \end{aligned}
    $$

    that is,

    $$
    \begin{aligned}
        \mathbb{P}[a_1 \leq \gamma_0 \leq a_2 ] &= \sqrt{(1-\alpha)}\\
         \mathbb{P}[a_3 \leq \mathbb{E}[\beta_k] \leq  a_4] &= \sqrt{(1-\alpha)}
    \end{aligned}
    $$

Because $a_1 \leq \gamma_0 \leq a_2 $ and $a_3 \leq \mathbb{E}[\beta_k] \leq a_4$ implies $\min[a_j] \leq \mathbb{E}[\beta_k]\gamma_0 \leq \max[a_j], j\in \{1,2,3,4\}$, we get 

$$
\mathbb{P}[\underline{a} \leq \mathbb{E}[\beta_k]\gamma_0\leq \overline{a}] \ge 1-\alpha
$$
    
\end{proof}

\subsection{Proof of Proposition \ref{prop:meta}}\label{pf:meta}

\begin{proof}

Firstly, We calculate the expectation of $\hat{\gamma}_k^2=\gamma_k^2+2\gamma_k u_k +u^2_k$. Let $\mu_\gamma = \mathbb{E}\gamma_k$.

For each part, we have
\begin{align}
&\mathbb{E}\gamma_k^2=\sigma^2_\gamma+\mu^2_\gamma\\
   & \mathbb{E}2\gamma_k u_k=0 \text{  by  } Cov(\gamma_k,u_k)=0 \\
  & \mathbb{E} u^2_k= \sigma^2_{uk}
\end{align}

Therefore, $\mathbb{E}\hat{\gamma}_k^2=\sigma^2_\gamma+\mu^2_\gamma+\sigma^2_{uk}$.

Now, considering the estimator,
\begin{align}
\hat{\beta}&=\frac{\sum_{k=1}^K(\hat{\gamma}_k-\overline{\hat{\gamma}_k})\hat{\tau}_k}{\sum_{k=1}^K(\hat{\gamma}_k-\overline{\hat{\gamma}_k})^2}\\
& = \frac{\frac{1}{K} \sum_{k=1}^K(\hat{\gamma}_k-\overline{\hat{\gamma}_k})[\mathbb{E}\delta_k + \gamma_k \beta_k + (\varepsilon_k+v_k)]}{\frac{1}{K} \sum_{k=1}^K(\hat{\gamma}_k-\overline{\hat{\gamma}_k})^2}
\end{align}

To see the convergence of the denominator, we re-write it as $\frac{\sum \hat{\gamma}^2_k}{K}-(\frac{\sum \hat{\gamma}_k}{K})^2$.

Note that $\hat{\gamma}^2_k$ is independent but not identically distributed. When applying Kolmogorov’s strong law of large numbers, we need assumption $\sum_{k=1}^\infty \frac{Var(\hat{\gamma}_k^2)}{k^2}<\infty$. Under the assumption, we conclude that 
$$
\frac{\sum \hat{\gamma}^2_k}{K} \rightarrow \sigma^2_\gamma+\mu^2_\gamma+ \overline{\sigma^2_{uk}} 
$$

and 
$$
(\frac{\sum \hat{\gamma}_k}{K})^2 \rightarrow \mu^2_\gamma
$$ with continuous mapping theorem. Thus, we have $\frac{1}{K} \sum_{k=1}^K(\hat{\gamma}_k-\overline{\hat{\gamma}_k})^2 \rightarrow \sigma^2_\gamma + \overline{\sigma^2_{uk}}$.

For the numerator, we consider $\frac{\sum_{k=1}^K(\hat{\gamma}_k-\overline{\hat{\gamma}_k})\gamma_k}{K}$. Similarly, we find $\frac{\hat{\gamma}_k\gamma_k}{K}\rightarrow \sigma^2_\gamma + \mu^2_\gamma$ and $\frac{\overline{\hat{\gamma}_k}\gamma_k}{K}\rightarrow \mu^2_\gamma$, and thus $\frac{\sum_{k=1}^K(\hat{\gamma}_k-\overline{\hat{\gamma}_k})\gamma_k}{K} \rightarrow \sigma^2_\gamma$. 

Return to the estimator, we have 
\begin{align}
\hat{\beta}&= \frac{\frac{1}{K} \sum_{k=1}^K(\hat{\gamma}_k-\overline{\hat{\gamma}_k})[\mathbb{E}\delta_k + \gamma_k \beta_k + (\varepsilon_k+v_k)]}{\frac{1}{K} \sum_{k=1}^K(\hat{\gamma}_k-\overline{\hat{\gamma}_k})^2}\\
&\xrightarrow{p} \lambda \mathbb{E}\beta_k.
\end{align} where we use the same methods in the proof of Proposition \ref{prop:consist} and zero covariance $Cov(\gamma_k,v_k)=0$ in the assumption.

\end{proof}

\section{Two-step SIMEX estimator}\label{si:twostep}

To achieve this result, we rely on two techniques. The first is honest estimation, which has become standard in causal forests and the recent double machine learning (DML) literature. Specifically, we partition the data into two disjoint subsamples, $I_1 \cup I_2$. The first subsample $I_1$ is used to identify the $K$ subgroups, while the second subsample $I_2$ is used to estimate $(\hat{\gamma}_k,\hat{\tau}_k,\hat{\sigma}^2_{uk})$, where $\hat{\sigma}^2_{uk}$ denotes the estimated asymptotic variance of $\hat{\gamma}_k$. In practice, $k$-fold cross-fitting is recommended, as is standard in the current DML literature. When the subgroups are pre-determined based on prior studies, this sample-splitting step is unnecessary.

The second requirement is that subgroup sizes grow fast relative to the number of subgroups. Let $n_k$ denote the sample size of subgroup $k$ in $I_2$, and define $\tilde{n}_k=\inf_k n_k$. We impose the following assumption.

\begin{assumption}\label{ass:two}
$\frac{1}{\sqrt{K}}\sum_{k=1}^K \frac{1}{\tilde{n}_k} \rightarrow_p 0$ as $\tilde{n}_k \rightarrow \infty$ and $K \rightarrow \infty$.
\end{assumption}

This assumption ensures that first-stage estimation error does not contribute to the asymptotic variance in the second stage. When subgroup sizes dominate the growth of the number of subgroups, the second-stage estimation is asymptotically unaffected by first-stage noise. We can achieve this by setting a minimum number of samples in each subgroup.


As the formal proof is technically involved, to avoid excessive notation, we present the complete theorem—along with the required regularity conditions and the detailed estimation procedure—in SI~\ref{si:twostep}. Here, we focus on the key intuition underlying the result.

\begin{theorem}
Suppose $(\tau_k,\gamma_k)$ satisfies the decomposition \eqref{equ:lm3}: $\tau_k = \mathbb{E}[\delta_k]+\beta\gamma_k+\varepsilon_k$, and $(\hat{\gamma}_k,\hat{\tau}_k,\hat{\sigma}^2_{uk})$ are obtained via sample splitting. Under Assumption~\ref{ass:two} and the regularity conditions stated in SI~\ref{si:twostep}, the two-step SIMEX estimator $\hat{\beta}_{\text{SIMEX}}$ is asymptotically normally distributed:
\[
\sqrt{K}\bigl(\hat{\beta}_{\text{SIMEX}}-\beta_{\text{SIMEX}}\bigr)\;\xrightarrow{d}\; \mathcal{N}(0,V_{\text{SIMEX}}),
\]
where $V_{\text{SIMEX}}$ takes the same form as in \citet{carroll1996asymptotics}.
\end{theorem} The exact expression for $V_{\text{SIMEX}}$ is provided in SI~\ref{si:twostep}. We emphasize that the variance is unaffected by the two-step estimation procedure because Assumption~\ref{ass:two} ensures that the contribution of the first-stage estimation error is asymptotically negligible.

We introduce the nuisance parameter $\eta_k=(\gamma_k,\tau_k,\sigma^2_{u,k})$. They are estimated in the first stage using sample split. In the first half $I_1$ with sample size $n_1$, we identify subgroups. In the second half $I_2$ with size $n_2=n-n_1$, we estimate nuisance parameter, denoted by  $(\hat{\gamma}_k,\hat{\tau}_k,\hat{\sigma}^2_{u,k})$. If we do not cross-fit, we need stronger Donsker-type conditions. This may not hold if we use sophisticated machine learning algorithms. In practice, $k$-fold cross-fitting is recommended, as is standard in the current DML literature.

SIMEX Procedures:

1. Simulation Step: For each $\lambda \in \Lambda={\lambda_1,...,\lambda_M}$ and each simulation draw $b=1,...,B$,
$$
\hat{\gamma}_k^{\lambda,b}=\hat{\gamma}_k+\sqrt{\lambda}\hat{\sigma}^2_{uk}e_{kb}, \; e_{kb}\sim N(0,1)
$$ 
then compute the naive slope $\hat{\beta}^{\lambda,b}$ from regressing $\hat{\tau}_k$ on $\hat{\gamma}_k^{\lambda,b}$, and average over $b$:
$$
\hat{\beta}_*^\lambda=\frac{1}{B}\sum_{b=1}^B\hat{\beta}^{\lambda,b}.
$$

2. Extrapolation Step: fit a parametric model $\hat{\beta}_*^\lambda=G(\Gamma,\lambda)$ and set 
$$
\hat{\beta}_{SIMEX}=G(\hat{\Gamma},-1).
$$

\citet{carroll1996asymptotics} show asymptotic normality of $\hat{\beta}_{SIMEX}$ by considering asymptotic linearization in two steps. The proof depends on the estimating equations. We define $0= \sum_{k=1}^K \psi_k^{\lambda,b}(\beta,\hat{\eta}_k)$. For our case, the slope in a simple linear regression, the estimating equation could be $\psi_k^{\lambda,b}(\beta)=(\hat{\lambda}_k^{\lambda,b}-\overline{\hat{\lambda}_k^{\lambda,b}})(\hat{\tau}_k-\overline{\hat{\tau}_k}-\beta(\hat{\lambda}_k^{\lambda,b}-\overline{\hat{\lambda}_k^{\lambda,b}}))$, where $\overline{a}_k$ denotes the average $\frac{1}{K}\sum_{k=1}^K a_k$. We follow their steps and show the necessary modification because of our two-step estimations. We maintain all regularity conditions used in \citep{carroll1996asymptotics}. The proof conditional on $I_A$. I omit this condition for simplicity. 

\begin{assumption}\label{ass:twostep}

    1. First stage: $\hat{\eta}_k \rightarrow \eta_k$ as $n_k \rightarrow \infty$, $||\hat{\eta}_k-\eta_k||=O_p(n_k^{-1/2})$ for all $k$.

    2. There exists a unique $\beta(\lambda)$ solving $\mathbb{E}[\psi_k^{\lambda,b}(\beta,\eta_k)]=0$, 
    $\mathbb{E}||\dot{\psi}_k^{\lambda,b}||^2$ bounded.

    3. $\psi_k^{\lambda,b}(\beta,\eta_k)$ is twice continuously differentiable in $\beta$ and $\eta$, with derivatives uniformly dominated.
    
    4.  $\lim_k \frac{1}{K}\mathbb{E}[\frac{\partial}{\partial \beta}\psi_k^{\lambda,b}(\beta,\eta)]$ exists and nonsingular.

    5. Lindeberg condition holds for $\psi_k^{\lambda,b}(\beta,\eta_k)$.
\end{assumption}

\begin{proposition}
Suppose $\eta_k$ are estimated by honest sample splitting, and regularity assumptions \ref{ass:twostep} hold. If $$\frac{1}{\sqrt{K}}\sum_{k=1}^K \tilde{n}_k^{-1/2}\rightarrow 0,$$
then, two-step SIMEX estimator $\hat{\beta}$ is asymptotically normal distributed with the same variance (and variance estimation) in \cite{carroll1996asymptotics} as $K \rightarrow \infty$ and $\tilde{n}_k=inf_k n_k \rightarrow \infty$, 
\end{proposition}

\begin{proof}

By standard estimation equation theory, $\hat{\beta}^{\lambda,b}$ converges to the the limit $\beta^\lambda$ that solves $\mathbb{E}[\psi_k^{\lambda,b}(\beta,\eta_k)]=0$. We expand estimating equations in nuisance parameter $\eta$,
$$
\frac{1}{\sqrt{K}}\sum_{k=1}^K \psi_k^{\lambda,b}(\beta,\hat{\eta}_k)= \frac{1}{\sqrt{K}}\sum_{k=1}^K \psi_k^{\lambda,b}(\beta,\eta_k)+\frac{1}{\sqrt{K}}\sum_{k=1}^K \dot{\psi}_k^{\lambda,b}(\hat{\eta}_k-\eta_k)+\frac{1}{\sqrt{K}}\sum_{k=1}^KR_k
$$ where $\dot{\psi}_k^{\lambda,b}$ is the gradient with respect to $\eta$. The second order remainder is $R_k=O_p(n_k^{-1})$ by the regularity of the second order derivative, and
$\frac{1}{\sqrt{K}}\sum_{k=1}^KR_k=\frac{1}{\sqrt{K}}\sum_{k=1}^KO_p(n_k^{-1})$ goes to zero under assumption $\frac{1}{\sqrt{K}}\sum_{k=1}^K\tilde{n}_k^{-1/2}\rightarrow 0$. 

Now, standard expansion around $\beta^\lambda$ yields
$$
\begin{aligned}
\sqrt{K}(\hat{\beta}^{\lambda,b}-\beta^\lambda)&=-[A(\lambda)]^{-1}\frac{1}{\sqrt{K}}\sum_{k=1}^K\psi_k^{\lambda,b}(\beta^\lambda,\hat{\eta}) +o_p(1)\\
&=-[A(\lambda)]^{-1}\frac{1}{\sqrt{K}}\sum_{k=1}^K[\psi_k^{\lambda,b}(\beta,\eta_k)+\dot{\psi}_k^{\lambda,b}(\hat{\eta}_k-\eta_k)]+o_p(1)
\end{aligned}
$$ where $A(\lambda)=\lim_k \frac{1}{K}\mathbb{E}[\frac{\partial}{\partial \beta}\psi_k^{\lambda,b}(\beta,\eta)]$. Because nuisance parameters are estimated, influence function has additional term, $\dot{\psi}_k^{\lambda,b}(\hat{\eta}_k-\eta_k)$, which does not appear in the original SIMEX by \citet{carroll1996asymptotics}.

Let $\Delta_k=\dot{\psi}_k^{\lambda,b}(\hat{\eta}_k-\eta_k)$. To show $\frac{1}{\sqrt{K}}\sum_{k=1}^K \Delta_k \rightarrow_p 0$, the procedures are now standard in the double machine learning literature \citep{chernozhukov2018double}. Because cross-fitting, conditional on $I_1$, $\mathbb{E}[\Delta_k]=0$. By Cauchy–Schwarz, the conditional variance $Var(\frac{1}{\sqrt{K}}\sum_{k=1}^K\Delta_K)=\frac{1}{K}\sum_{k=1}^KVar(\Delta_k)\leq \frac{1}{K}\sum_{k=1}^K\mathbb{E}||\dot{\psi}_k^{\lambda,b}||^2 \mathbb{E}||\hat{\eta}_k-\eta_k||^2$. By assumption $\mathbb{E}||\dot{\psi}_k^{\lambda,b}||^2$ is bounded and we know $\sqrt{n_k}(\hat{\eta}_k-\eta_k)=O_p(1)$, therefore, conditional variance goes to zero if $\frac{1}{K}\sum_{k=1}^K \frac{1}{\tilde{n}_{k}}\rightarrow_p 0$. Then, by Chebyshev's inequality, we conclude $\frac{1}{K}\sum_{k=1}^KVar(\Delta_k) \rightarrow 0$.

Next, to show the influence function $\frac{1}{\sqrt{K}}\sum_{k=1}^K[\psi_k^{\lambda,b}(\beta,\eta_k)]$ converges to normal distribution, here because data comes from different distribution, we apply Lindeberg (or Lyapunov) non-iid CLT by assuming $\sup_K \mathbb{E}|\psi_K^{\lambda,b}|^{2+\delta}<\infty$ and a Lindenberg condition, we have $\frac{1}{\sqrt{K}}\sum_{k=1}^K\psi_k^{\lambda,b}(\beta^\lambda) \rightarrow N(0,\Omega(\lambda))$. 
Then, $\sqrt{K}(\hat{\beta}^{\lambda,b}-\beta^\lambda) \rightarrow N(0,A(\lambda)^{-2}\Omega(\lambda))$.

The remaining proofs exactly follows \citet{carroll1996asymptotics}. As in their section 3.2, we averages over B draws and combine to extrapolation part . Averaging over b, $\hat{\beta}^{\lambda}_*=\frac{1}{B}\sum_{b}\hat{\beta}^{\lambda}$, we have $\sqrt{K}(\hat{\beta}^{\lambda}_*-\beta^{\lambda})\rightarrow N(0,V_\lambda)$.

Finally, in the extrapolation step, we fit a parametric model $\hat{\beta}^{\lambda}_*=G(\Gamma,\lambda)$. Let $\hat{\beta}^{\Lambda}_*$ a vector of $\hat{\beta}^{\lambda}_*$ on the grid $\Lambda=(\lambda_1,...,\lambda_m)$. One now set $\hat{\beta}^{\Lambda}_*=G(\Gamma,\lambda)$ for a vector of parameters $\Gamma$ and fits $G(\Gamma,\lambda)$, $\lambda \in \Lambda$, to the elements of $\hat{\beta}^{\Lambda}_*$. Let $R(\Gamma)=\hat{\beta}^{\Lambda}_*-\{G(\Gamma,\lambda_1),...,G(\Gamma,\lambda_M) \}^T$, a typical estimator of $\hat{\Gamma}$ is obtained by minimizing $R^T(\Gamma)C^{-1}R(\Gamma)$, for some positive definite matrix $C$. This implies the estimating equation $0=s(\Gamma)C^{-1}R(\Gamma)$, where $s(\Gamma)=\{\nabla_\Gamma G(\Gamma,\lambda_1),...,\nabla_\Gamma G(\Gamma,\lambda_m)\}$. Then $\hat{\Gamma}-\Gamma$ is asymptotically normal with covariance
$$\Sigma(\Gamma)=D^{-1}(\Gamma)s(\Gamma)C^{-1}A(\lambda)^{-2}\Omega(\lambda)C^{-1}s^T(\Gamma)D^{-1}(\Gamma)$$ where $D^{-1}(\Gamma)=s(\Gamma)C^{-1}s^T(\Gamma)$.

The finial SIMEX estimator is $\hat{\beta}_{SIMEX}=G(\hat{\Gamma},-1)$. As illustrated in 3.3 \citet{carroll1996asymptotics}, the distribution is obtained through delta method,
$$\sqrt{K}(\hat{\beta}_{SIMEX}-\beta_{SIMEX})\rightarrow N(0,\nabla^T_\Gamma G(\Gamma,-1)\Sigma_\Gamma(\Gamma)\nabla_\Gamma G(\Gamma,-1))$$ 
\end{proof}

Let $\hat{\overline{\gamma}} = \frac{1}{K}\sum_{k=1}^K \hat{\gamma}_k$. The estimator of the average indirect effect is $\hat{\eta}=\frac{1}{K}\sum_{k=1}^K\hat{\beta}\hat{\gamma}_k=\hat{\beta}\hat{\overline{\gamma}}_k$. The variance of $\hat{\eta}$ can then be derived using the delta method:
$$
\begin{aligned}
Var(\sqrt{K}(\hat{\eta}-\eta)) \approx \overline{\gamma}^2 &Var(\sqrt{K}(\hat\beta_{SIMEX}-\beta_{SIMEX}))+\beta^2Var(\sqrt{K}(\hat{\overline{\gamma}}-\overline{\gamma}))\\
&+2\beta\overline{\gamma}Cov(\sqrt{K}(\hat\beta_{SIMEX}-\beta_{SIMEX}), \sqrt{K}(\hat{\overline{\gamma}}-\overline{\gamma}))
\end{aligned}
$$ The first term can be estimated consistently, as shown above. The second term can be ignored under the assumption that $\frac{1}{\sqrt{K}}\sum_{k=1}^K \tilde{n}_k^{-1/2}\rightarrow 0$. However, the third covariance term is less straightforward. It can be ignored if we further split the sample so that $\hat{\beta}{\mathrm{SIMEX}}$ is estimated using an independent subsample. There is, however, a trade-off: this additional sample splitting may increase the variance of $\hat{\beta}_{SIMEX}$. Therefore, to achieve better finite-sample performance, we instead employ an intersection–union test and confidence interval based on the test.

\section{More on Simulation and Application}\label{si:simul}

\subsection{Power Analysis}

We investigate how the power changes with the number of groups and the number of individuals within each group. In the simulation, we assume that at the beginning, there are 10 groups. Each group has population $n$. Suppose researchers can enroll another $k*n$ individuals depending on their budget ($k$ is an integer). They face a decision: whether to add $k$ more groups or to increase the group size (i.e. add $\frac{kn}{10}$ individuals to each existing groups).

The data generation process follows the same procedure as outlined in the main text. When we increase the group size, it becomes necessary to enhance the precision of the observed values. To achieve this, we employ the following approximation:

$$
se(\hat{\beta}) \approx \frac{c}{\sqrt{n}}
$$

Therefore, when adding $\frac{n}{10}$ to each group, the standard error is adjusted to $\frac{\sigma \sqrt{n}}{\sqrt{n+kn/10}}$ where $\sigma$ is the baseline standard error for $\gamma$ and $\tau$ in the simulation.

\begin{figure}
    \centering
   \includegraphics[width=1\textwidth]{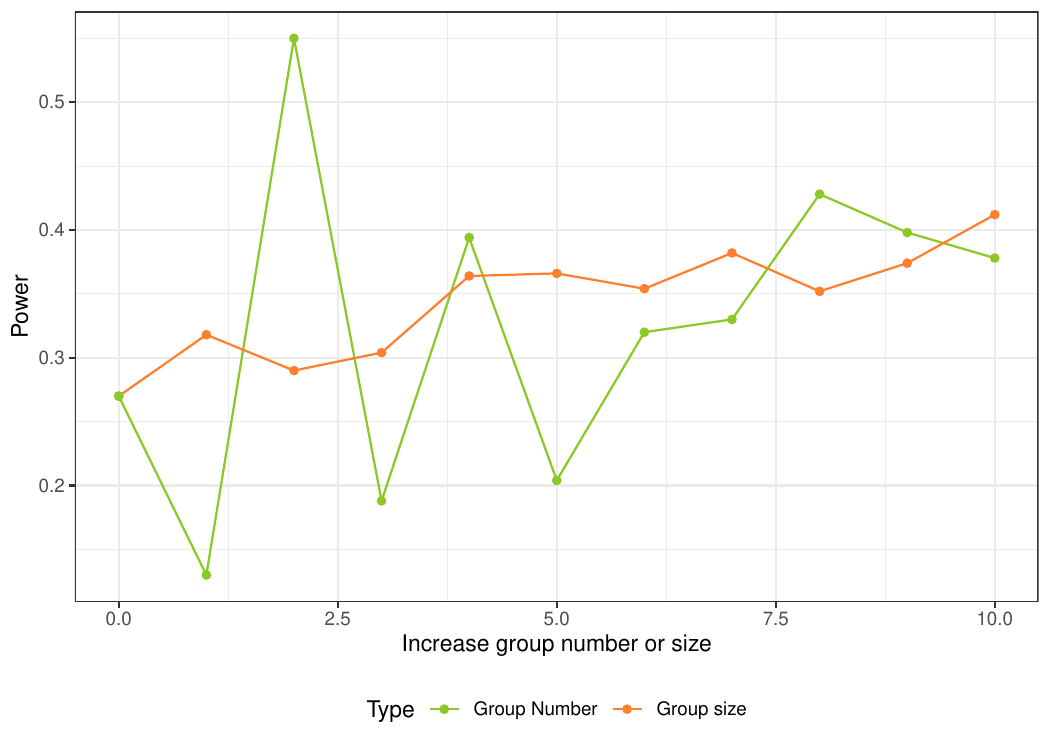}
    \caption{Power analysis: Group number and size}
    \label{fig:power}
\end{figure}

In the figure \ref{fig:power}, the vertical axis represents the power. The number on the horizontal line is $k$, which denotes $k$ more groups ( green line) or $kn/10$ more individuals in each group (orange line). The figure does not readily suggest a clear decision regarding the optimal choice between these two options. This ambiguity underscores the importance of conducting a thorough power analysis prior to the experiment to guide such decisions.

\subsection{Application II}\label{sec:app_ind}

In this section, we employ the full dataset to identify subgroups using causal trees, as depicted in Figure \ref{fig:sub_app}. This process reveals a greater number of subgroups. It is important to note that the number of detected subgroups is influenced by various factors, including the minimum number of observations required for each split. The corresponding estimates are shown in Figure \ref{fig:beta_app}. Upon examination, it is evident that the estimate of $\beta$ closely aligns with the one discussed in the main text.

\begin{figure}[!htbp]
    \centering
    \includegraphics[width=0.7\textwidth]{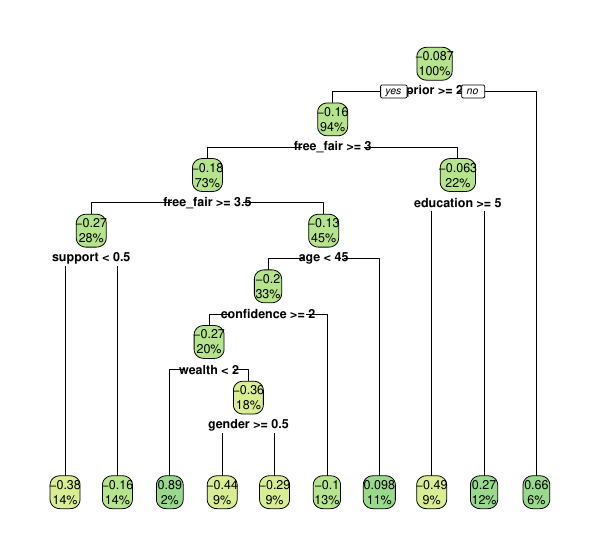}
    \caption{Heterogeneous Subgroup Design}
    \label{fig:sub_app}
\end{figure}

\begin{figure}[!htbp]
    \centering
    \includegraphics[width=0.7\textwidth]{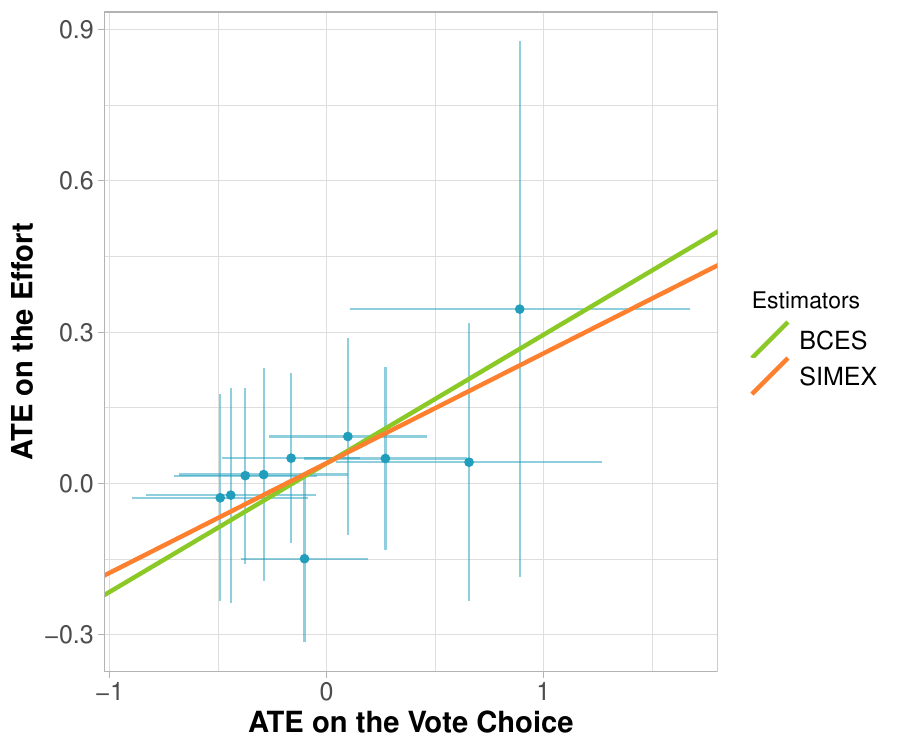}
    \caption{Heterogeneous Subgroup Design}
    \label{fig:beta_app}
\end{figure}






\end{document}